\def\balign#1\ealign{\begin{align}#1\end{align}}
\def\baligns#1\ealigns{\begin{align*}#1\end{align*}}
\def\bitemize#1\eitemize{\begin{itemize}#1\end{itemize}}
\def\benumerate#1\eenumerate{\begin{enumerate}#1\end{enumerate}}
\renewcommand{\exp}[1]{\operatorname{exp}\left(#1\right)} %
\providecommand{\argmin}{\mathop\mathrm{arg min}}
\def\E{\mathbb{E}} %
\def\E{\mathbb{E}} %
\newtheorem{example}{Example}
\newtheorem{assumption}{Assumption}
\newtheorem{corollary}{Corollary}
\newtheorem{theorem}{Theorem}
\newtheorem{lemma}{Lemma}
\begin{document}

\title{Model selection for estimation of causal parameters}
\author{Dominik Rothenh\"ausler
    }

\maketitle

\begin{abstract}
A popular technique for selecting and tuning machine learning estimators is cross-validation. Cross-validation evaluates overall model fit, usually in terms of predictive accuracy. In causal inference, the optimal choice of estimator depends not only on the fitted models, but also on assumptions the statistician is willing to make. In this case, the performance of different (potentially biased) estimators cannot be evaluated by checking overall model fit.

We propose a model selection procedure that estimates the squared $\ell_2$-deviation of a finite-dimensional estimator from its target. The procedure relies on knowing an asymptotically unbiased "benchmark estimator" of the parameter of interest. 
Under regularity conditions, we investigate bias and variance of the proposed criterion compared to competing procedures and derive a finite-sample bound for the excess risk compared to an oracle procedure. The resulting estimator is discontinuous and does not have a Gaussian limit distribution. Thus, standard asymptotic expansions do not apply. We derive asymptotically valid confidence intervals that take into account the model selection step. 
  
The performance of the approach for estimation and inference for average treatment effects is evaluated on simulated data sets, including experimental data, instrumental variables settings and observational data with selection on observables. 
\end{abstract}

\section{Introduction}

Model selection is a fundamental task in statistical practice. Usually, the aim is to find a model that optimizes  overall model fit. If the loss function is quadratic, the total error can be decomposed into the error due to variance and the error due to bias. %
A popular technique to balance the bias-variance trade-off in the context of prediction is cross-validation. However, the performance of common estimators in causal inference does not only depend on prediction performance, but, as we will discuss below, also on assumptions the statistician is willing to make. Thus, methods cannot be reliably compared by checking overall model fit. %
In the literature, there exist some solutions for parameter-specific model selection, but we currently lack a reliable general-purpose tool. In the context of causal inference, a reliable parameter-specific model selection tool could enable the following applications.

\paragraph*{Moving the goalpost.} Estimating average treatment effects from observational data  can be unreliable, if conditional treatment assignment probabilities are close to zero or one. To address this issue, some researchers move the goalpost by switching to causal contrasts that can be estimated more reliably, see \citet{lalonde1986evaluating}, \citet{heckman1998matching}, \citet{crump2006moving} and references therein. For example, instead of estimating the average treatment effect (ATE), a practitioner might switch to estimating the average treatment effect on the treated (ATT), or other causal contrasts such as the overlap effect. This is often done in an ad-hoc fashion. %
Using a model selection tool in this context could allow to trade off bias and variance when switching from estimating the ATE to estimators of other causal contrasts.  %

\paragraph*{Data fusion.} Combining evidence across data sets in the context of causal inference has recently attracted increasing interest \citep{peters2015causal,bareinboim2016causal, athey2020combining}. %
However, data quality often varies from data set to data set. In this case, using all data sets can lead to untrustworthy, biased estimators. A reliable model selection tool could make it possible to distinguish which data sets and estimators are useful for solving the estimation problem at hand, and which are not.

\paragraph*{Choosing between estimators that are optimal in different models.} If all confounders are observed, semi-parametrically efficient estimators such as augmented inverse probability weighting are attractive as they do not rely on parametric specifications \citep{robins1994estimation}. On the other hand, if the model is well-specified, parametric estimators such as ordinary least-squares can achieve higher efficiency and the fitted model may be easier to interpret. A reliable model selection procedure would allow practitioners to automatically select interpretable models if the parametric model is a good approximation, and semi-parametric estimators if the parametric model is far from the underlying data-generating process.

\paragraph*{Leveraging conditional independencies.}  

Researchers are often uncomfortable with using statistical procedures that only work under strong assumptions. Using such methods over other procedures may introduce some bias if the assumptions are violated, but has the potential to reduce variance. For example, conditional independence assumptions can be leveraged to improve precision of treatment effect estimators \citep{athey2019surrogate,guo2020efficient}.  A model selection tool as described above would allow to systematically trade off bias and variance when switching between estimation procedures that are optimal under different graphical structures. Doing so can potentially improve precision in scenarios in which researchers are not comfortable with making strong assumptions.  %

\subsection{Related work}

Model selection has a long history in statistics and machine learning. For optimizing loss-based estimators, the most commonly used methods include cross-validation, the  Akaike information criterion, and the Bayesian information criterion \citep{akaike1974new,schwarz1978estimating,friedman2001elements,arlot2010survey}.

The focused information criterion is a model selection criterion which, for a given focus parameter, estimates the mean-squared error of submodels \citep{claeskens2003focused,claeskens2008model}. It relies on knowing an asymptotically unbiased estimator of the parameter of interest. Its theoretical justification is given in a local misspecification framework. %

More recently, \citet{cui2019bias} introduce a model selection tool for finite-dimensional functionals in a semiparametric model if a doubly robust estimation function is available. It is based on a pseudo-risk criterion that has a robustness property if one of the estimators is biased.

For the task of model selection when estimating heterogeneous treatment effects, several methods have been developed \citep{kapelner2014inference,rolling2014model,athey2016recursive,nie2017quasi,zhao2017uplift,powers2018some}. Most of the methodologies are specific to the considered model class. A comparison of this line of work for individual treatment effects can be found in \citet{schuler2018comparison}.

\citet{van2003unified} propose a loss-based approach for parameter-specific model selection. In this work, the authors recommend minimizing an empirical estimate of the overall risk $ R(\hat \theta^{(g)}, \hat \eta)$,  %
where $\hat \theta^{(g)}$, $g=1,\ldots,G$ are candidate estimators and $\hat \eta$ is an efficient estimator of the nuisance parameter, computed on the training data set. %
Our approach is more generic in the sense that we do not assume that parameter of interest minimizes a known loss function.%

Closest to our work is the sample-splitting criterion developed by~\citet{brookhart2006semiparametric}. Roughly speaking, the data is split into a training and a test data set. Then, estimators are computed on the training and the test data set, and the squared deviation of estimators is aggregated across several splits.  The criterion developed by~\citet{brookhart2006semiparametric} can be seen as a form of Monte Carlo cross-validation. In the following, we discuss a variant of this approach that splits the data into $k$ folds and thus mimics $k$-fold cross-validation procedures, which are popular in practice. The data $D = (D_{1},\ldots,D_{n})$ is randomly split into $K$ disjoint  roughly equally-sized folds $D^{0,1},\ldots,D^{0,K}$. Define $D^{1,k} = D \setminus D^{0,k}$. Assuming that the data are i.i.d., $D^{1,k}$  and $D^{0,k}$ are independent for each $k$.
Let $\hat \theta^{(0)}$ be an unbiased estimator of the parameter of interest $\theta^{(0)} \in \mathbb{R}^d$. If several unbiased estimators are available, aggregation procedures such as inverse variance weighting can be used in a pre-processing step to obtain $\hat \theta^{(0)}$. Let $\hat \theta^{(g)}$ be candidate estimators, $g=0,\ldots,G$. Then, we can compute the risk criterion
\begin{equation}\label{eq:18}
  \tilde R(g) = \frac{1}{K} \sum_{k=1}^{K} \| \hat \theta^{(g)} (D^{1,k}) - \hat \theta^{(0)}(D^{0,k}) \|_2^{2}.
\end{equation}
Using independence of $D^{1,k}$ and $D^{0,k}$,
\begin{equation*}
  \mathbb{E}[\tilde R(g)] =  \mathbb{E}[ \| \hat \theta^{(g)}(D^{1,1}) - \theta^{(0)} \|_2^{2}]   + \sum_{j=1}^d \text{Var}(\hat \theta_j^{(0)}(D^{0,1})).
\end{equation*}
As $\text{Var}(\hat \theta^{(0)}(D^{0,1}))$ is constant in $g$, the criterion in equation~\eqref{eq:18} can be used to select an estimator $\hat \theta^{(g)}$ with low mean-squared error for estimating $\theta^{(0)}$ among $\hat \theta^{(g)}, g=0,\ldots,G$. The criterion in equation~\eqref{eq:18} is attractive as it is simple and widely applicable. We will compare the proposed model selection criterion to the criterion in equation~\eqref{eq:18}, both from a theoretical perspective and in simulations.

\subsection{Our contribution}

In this paper, we work towards making parameter-specific model selection more reliable. We derive a model selection criterion that estimates the squared $\ell_2$-deviation of an estimator from its target. We show that the selected model has equal or lower variance than the baseline estimator asymptotically. %
Compared to the model selection procedure \eqref{eq:18}, we show that the proposed criterion has equal or lower variance asymptotically. The proposed criterion is flexible in the sense that it can be used for any low-dimensional estimator in both parametric and semi-parametric settings. Theoretical justification of the method is given under the assumption of asymptotic linearity. 

Even if the candidate estimators are asymptotically linear, the model selection procedure is discontinuous and will not result in a regular estimator, even for $n \rightarrow \infty$. Thus, the final estimator does not have a Gaussian limit distribution. We derive asymptotically valid confidence intervals for the resulting estimator that takes into account the model selection step.

In previous work, the goal of model selection for estimation is usually to select a nuisance parameter model from a set of candidate models, but the identification strategy is held fixed across models \citep{van2003unified,brookhart2006semiparametric,cui2019bias}.  %
In this paper, the goal is to select among different estimands, or identification
strategies. Mathematically, this correspond to selecting among different \emph{functionals} of the underlying data generating distribution. In some situations, this results in dramatic improvements in the mean-squared error. However, there is no free lunch. Compared to the baseline procedure, for fixed $n$, model selection can lead to increased risk in parts of the parameter space.  We provide a finite-sample bound that reveals that the excess risk due to model selection becomes negligible as the dimension of the target parameter grows.

In simulations, the proposed criterion exhibits reliable performance in a variety of scenarios, including experiments, instrumental variables settings and data with selection on observables. The code can be found at \href{http://www.github.com/rothenhaeusler/tms}{github.com/rothenhaeusler/tms}. 

\subsubsection{Outline} %

In Section~\ref{sec:method-applications}, we introduce a method for parameter-specific model selection and discuss an example. Theory for the method is discussed in Section~\ref{sec:theory}. We evaluate the performance of the proposed procedure on simulated data in Section~\ref{sec:numerical-results}.

\section{Targeted model selection}\label{sec:method-applications}

This section consists of two parts. We briefly discuss the setting in Section~\ref{sec:setting-notation}. Then, we introduce the method in Section~\ref{sec:proposed-procedure} and discuss basic properties. %

\subsection{Setting and notation}\label{sec:setting-notation}

We observe data $D = (D_{i}$, $i=1,\ldots,n)$, where the $D_{i}$ are independently drawn from some unknown distribution $\mathbb{P}$. Suppose we have access to estimators $\hat \theta^{(g)}(D)$, $g=0,\ldots,G$, of some unknown parameter $\theta^{(0)}$. In the following, to simplify notation, we will write $\hat \theta^{(g)}$ instead of $\hat \theta^{(g)}(D)$. %
We assume that the baseline estimator $\hat \theta^{(0)}$ is asymptotically unbiased for $\theta^{(0)}$, i.e.\ that $\mathbb{E}[\hat \theta^{(0)}] = \theta^{(0)} + o(n^{-1/2})$. In practice, the data scientist may know several estimators that are asymptotically unbiased for the parameter of interest. In this case, one can use aggregation procedures such as inverse variance weighting to construct an optimally weighted aggregated estimator $\hat \theta^{(0)}$. 

In addition, the data scientist may have access to estimators $\hat \theta^{(g)}$ for which the data scientist is not sure whether they are approximately unbiased for the effect of interest. The goal is to select among the set of estimators, minimizing the mean-squared error with respect to the target of interest $\theta^{(0)}$. We assume that $\mathbb{E}[\hat \theta^{(g)}] = \theta^{(g)} + o(n^{-1/2})$ for some unknown $\theta^{(g)}$ and that $\sqrt{n}(\hat \theta^{(g)} - \theta^{(g)})$ converges to a non-degenerate random variable. We write $\sigma_j^{(g)}$ for the asymptotic standard deviation of $\sqrt{n}(\hat \theta^{(g)}_j - \theta^{(g)}_j)$. Similarly, we assume that $\sqrt{n}(\hat \theta^{(g)} - \hat \theta^{(0)} - (\theta^{(g)} - \theta^{(0)}))$ converges to a non-degenerate random variable for $g \neq 0$ and write $\tau_i^{(g)}$ for the asymptotic standard deviation of $\sqrt{n}(\hat \theta^{(g)}_j - \theta^{(g)}_j - \hat \theta^{(0)}_j + \theta^{(0)}_j)$. 

\subsection{The method}\label{sec:proposed-procedure}
We aim to find an estimator $g$ that minimizes
\begin{equation}\label{eq:1}
  R(g) = \mathbb{E}[ \| \hat \theta^{(g)} - \theta^{(0)} \|_2^{2}].
\end{equation}
Here and in the following, we suppress the dependence of $R(g)$ and $\hat \theta^{(g)}$ on $n$. As bias and variance of $\hat \theta^{(g)}$ are unknown, the function $R(g) $ is unknown and one has to minimize a proxy of the risk $R(g)$ instead. %
We propose to estimate $R(g)$ in equation~\eqref{eq:1} via
\begin{equation}\label{eq:13}
 \hat R(g) =   \| \hat \theta^{(g)} - \hat \theta^{(0)} \|_2^{2} + \sum_{j=1}^d  \frac{(\hat \sigma_j^{(g)})^2}{n} - \frac{ (\hat \tau_j^{(g)})^2}{n}, %
\end{equation}
where $\hat \sigma_j^{(g)}$ is an estimator of the asymptotic standard deviation of $\sqrt{n}(\hat \theta^{(g)}_j - \theta^{(g)}_j)$ and $ \hat\tau_j^{(g)}$ is an estimator of the asymptotic standard deviation of $\sqrt{n}(\hat \theta^{(g)}_j -\theta^{(g)}_j - \hat \theta^{(0)}_j + \theta^{(0)}_j)$. If the estimators are asymptotically linear (i.e.\ in some semi-parametric or low-dimensional parametric settings), consistent estimators $\hat \tau^{(g)}$ and ${\hat \sigma^{(g)}}$ are usually available via plug-in estimators of the variance of the influence function \citep{van2000asymptotic,tsiatis2007semiparametric}. An example will be discussed below. We propose to choose a final estimate $\hat \theta^{(\bar g)}$ by solving
\begin{equation}\label{eq:9}
\bar g =   \arg \min_{g} \hat R(g).
\end{equation}
 Let us consider a linear regression example. This example was mainly chosen for expository simplicity; the main motivating examples for this method are drawn from causal inference. The causal inference examples need more discussion and will be explained in detail in Section~\ref{sec:numerical-results}.
\begin{example}[Model selection for parameter estimation]
Usually, when conducting model selection in the context of prediction, the goal is to find a model that can be estimated well and is a good approximation of some complex model of interest. However, if the purpose is parameter estimation, fitting complex models can reduce variance while potentially introducing bias. Such settings appear in causal inference and will be further discussed in  Section~\ref{sec:numerical-results}. Here, we consider the task where the goal is to fit a regression with just one covariate; but there are additional covariates at our disposal that can be used to reduce variance, while potentially introducing some bias for the parameter of interest.
Let $ Y_i = X_i \theta^{(0)} + \epsilon_i$, where $D_i = (Y_i,X_i)$ are i.i.d.\ and the $\epsilon_i$ are noise terms that are uncorrelated of the $X_i$. Furthermore, for simplicity we assume that $Y_i$, $X_i$ and $\epsilon_i$ are centered. We are interested in the parameter $\theta^{(0)} = \arg \min \mathbb{E}[(Y - X \theta)^2]$ and consider the baseline estimator
\begin{equation*}
  \hat  \theta^{(0)} = \arg \min_\theta \sum_{i=1}^n (Y_i - X_i \theta)^2.
\end{equation*}
Let us assume that we have access to observations $Z_1,\ldots,Z_n$ from some additional covariate. One may consider the estimator
\begin{equation*}
  \hat \theta^{(1)} = \arg \min_{\theta} \min_\eta \sum_{i=1}^n (Y_i - X_i \theta - Z_i \eta)^2,
\end{equation*}
Let $(X,Y,Z,\epsilon)$ denote a generic $(X_i,Y_i,Z_i,\epsilon_i)$. 
If $Z$ is correlated with $Y$ and only weakly correlated with $X$, this estimator may reduce asymptotic variance compared to $\hat \theta^{(0)}$ since intuitively speaking, adjusting for $Z$ reduces unexplained variation in the residuals. On the other hand if $Z$ is strongly correlated with $X$, $\hat \theta^{(1)}$ may converge to a different parameter than $\hat \theta^{(0)}$.
Under regularity conditions \citep[Section 5]{van2000asymptotic}, $\hat \theta^{(0)}$ is asymptotically linear and unbiased for $\theta^{(0)} := \arg \min_\theta \mathbb{E}[(Y - X \theta)^2]$, i.e.\
\begin{equation*}
 \sqrt{n} (\hat \theta^{(0)} - \theta^{(0)}) = \frac{1}{\sqrt{n}} \sum_{i=1}^n \mathbb{E}[X^2]^{-1} X_i \epsilon_i + o_P(1). 
\end{equation*}
Similarly, under regularity conditions,
\begin{equation*}
  \sqrt{n} (\hat \theta^{(1)} - \theta^{(1)}) = \frac{1}{\sqrt{n}}\sum_{i=1}^n e_1^\intercal \mathbb{E}[(X,Z)^\intercal (X,Z)]^{-1}  (X_i,Z_i)^\intercal (Y_i - X_i \theta^{(1)} - Z_i \eta^{(1)})+ o_P(1),
\end{equation*}
where $(\theta^{(1)},\eta^{(1)}) = \arg \min_{(\theta,\eta)} \mathbb{E}[(Y - X \theta - Z \eta)^2]$ and where $e_j$ denotes the $j$-th unit vector.
Thus,
\begin{align*}
  (\sigma^{(0)})^2 &=  \text{Var}(  \mathbb{E}[X^2]^{-1} X \epsilon ),  \\
  (\sigma^{(1)})^2 &=  \text{Var}( e_1^\intercal \mathbb{E}[(X,Z)^\intercal (X,Z)]^{-1}  (X,Z)^\intercal (Y - X \theta^{(1)} - Z \eta^{(1)}) ,  \\
  (\tau^{(0)})^2 &=  0, \\
  (\tau^{(1)})^2 &=  \text{Var} \left( e_1^\intercal \mathbb{E}[(X,Z)^\intercal (X,Z)]^{-1}  (X,Z)^\intercal (Y - X \theta^{(1)} - Z \eta^{(1)}) - \mathbb{E}[X^2]^{-1} X \epsilon \right).
\end{align*}
These quantities can be consistently estimated via plug-in estimators in standard settings. For example, $(\sigma^{(1)})^2$ and $(\sigma^{(0)})^2$ can be consistently estimated via the sandwich estimator under regularity assumptions \citep{huber1967behavior}.
\end{example}

We will compare the risk proxy in equation~\eqref{eq:13} to sample-splitting based criteria in Section~\ref{sec:theory}. %
The method is evaluated on simulated data sets  in Section~\ref{sec:numerical-results}.%

\subsubsection{Improving precision} 

The risk criterion can be decomposed into several parts, i.e.\ $\hat R(g) = \sum_{j=1}^d \hat R_{\text{bias},j}(g) + \hat R_{\text{var},j}(g)$, where
\begin{equation*}
   \hat R_{\text{bias},j}(g) =  (\hat \theta_j^{(g)} - \hat \theta_j^{(0)})^2 - \frac{\hat \tau_j^{(g)}}{n},
\end{equation*}
and
\begin{equation*}
  \hat R_{\text{var},j}(g) =  \frac{\hat \sigma_j^{(g)}}{n}.
\end{equation*}
As the naming indicates, the first term can be interpreted as an estimate of the squared bias $( \theta^{(g)} - \theta^{(0)} )^2$, whereas the second term is an estimate of the variance of $\hat \theta^{(g)}$. Since we know that squared bias terms are non-negative this motivates defining the following modified risk criterion:
\begin{equation}\label{eq:10}
  \hat R^{\text{mod}}(g) =   \left( \sum_{j=1}^d  (\hat \theta^{(g)}_j - \hat \theta^{(0)}_j)^{2} - \frac{ ( \hat \tau_j^{(g)})^2}{n} \right)_{+} +  \sum_{j=1}^d\frac{(\hat \sigma_j^{(g)})^2}{n}.
\end{equation}
Then the final estimator $\hat \theta^{(\bar g)}$ is chosen such that $\bar g$ minimizes equation~\eqref{eq:10}. We take the positive part of the sum (instead of the sum of positive parts) as this allows random errors to cancel out for large $d$. This will be important for the theory developed in Section~\ref{sec:finite-sample-bound}. If there are ties, we select $\bar g$ as the one that minimizes $ \| \hat \theta^{(g)} - \hat \theta^{(0)} \|_2^{2}$ among the $g$ that satisfy $ \hat R^{\text{mod}}(g) = \min_{g'} \hat R^{\text{mod}}(g')$. The criterion $\hat R^{\text{mod}}(g)$ is not asymptotially unbiased for $R(g)$, but has some favorable statistical properties that we will discuss in the following section.

\section{Theory}\label{sec:theory}

In this section we discuss the theoretical underpinnings of the method introduced in Section~\ref{sec:method-applications}.  First, we show that the criterion $\hat R(g)$ is asymptotically unbiased for estimating the mean-squared error $R(g) =\mathbb{E}[ \|\hat \theta^{(g)} - \theta^{(0)} \|_2^{2}]$. Secondly, we compare the criterion $\hat R(g)$ to cross-validation in terms of asymptotic bias and variance. Then, we discuss the asymptotic risk of the resulting estimator.  We derive asymptotically valid confidence intervals for the parameter of interest that takes into account the model selection step. Finally, we present a finite-sample bound that shows that if the dimension of the target parameter is large, the excess risk due to model selection becomes negligible. %

\subsection{Assumptions}

We make two major assumptions, in addition to the assumptions outlined in Section~\ref{sec:setting-notation}. %
The first major assumption is a slightly stronger version of asymptotic linearity. Asymptotic linearity is an assumption that is commonly made  to justify asymptotic normality of an estimator \citep{van2000asymptotic,tsiatis2007semiparametric}. As our goal is to estimate the mean-squared error of an estimator, we use a slightly stronger version that guarantees convergence of second moments. The second major assumption is that the variance estimates are consistent. %

\begin{assumption}\label{assumptions} We make two major assumptions. 
\begin{enumerate}  
\item Let $\hat \theta^{(g)}$, $g=0,\ldots,G$ be estimators such that
\begin{equation*}
 \hat \theta^{(g)} - \theta^{(g)} =  \frac{1}{n} \sum_{i=1}^{n} \psi^{(g)}(D_{i}) + e_{g}(n),
\end{equation*}
where $\psi^{(g)}(D_{i})$ are centered and have finite nonzero second moments, and %
$\mathbb{E}[ \|e_{g}(n)\|_2^{2}] = o(1/n)$.  To avoid trivial special cases, in addition we assume that the covariance matrix of $(\psi^{(0)},\ldots,\psi^{(G)})$ is positive definite.%
\item The estimators of variance are consistent, that means
  \begin{align*}
  (\hat \tau^{(g)})^2   %
     &= (\tau^{(g)})^2 + o_{P}(1), \\
    (\hat \sigma^{(g)})^2  %
    &=  (\sigma^{(g)})^2 + o_{P}(1).
  \end{align*}
\end{enumerate}
\end{assumption} 
Let us compare the first part of the assumption to asymptotic linearity. Asymptotic linearity assumes that $\| e_{g}(n)^{2} \|_2^2 = o_{P}(1/n)$ while we assume that $\mathbb{E}[ \|e_{g}(n)^{2} \|_2^2] = o(1/n)$. Thus, our assumption is stronger than asymptotic linearity. Let us now turn to our theoretical results.
\subsection{Asymptotic unbiasedness}

Our first result shows that the proposed criterion is asymptotically unbiased for the mean-squared error of $\hat \theta^{(g)}$. The convergence rate depends on whether $\hat \theta^{(g)}$ is  asymptotically biased for estimating $ \theta^{(0)}$. If the estimator $\hat \theta^{(g)}$ is unbiased, the convergence rate is faster. The proof of the following result can be found in the supplement.

\begin{theorem}[Asymptotic unbiasedness of $\hat R(g)$]\label{theorem:unbiased}
Let Assumption~\ref{assumptions} hold.   \begin{enumerate}
\item    If $\theta^{(g)} = \theta^{(0)}$,
\begin{equation*}
 n ( \hat R(g) - \mathbb{E}[ \|\hat \theta^{(g)} - \theta^{(0)} \|_2^{2}])
\end{equation*}
converges weakly to a random variable with mean zero.
\item If $\theta^{(g)} \neq \theta^{(0)}$,
\begin{equation*}
 \sqrt{n} ( \hat R(g) - \mathbb{E}[ \|\hat \theta^{(g)} - \theta^{(0)} \|_2^{2}])
\end{equation*}
converges weakly to a random variable with mean zero. %
\end{enumerate}
\end{theorem}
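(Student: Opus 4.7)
The plan is to perform a bias--variance decomposition of the risk estimator around its population counterpart, $R(g) = \text{Var}(\hat\theta_g) + (\mathbb{E}[\hat\theta_g] - \theta_0)^2$. I would introduce bookkeeping quantities $b = \theta_g - \theta_0$, $c_g = \mathbb{E}[\hat\theta_g] - \theta_g$, and $X = \hat\theta_g - \hat\theta_0 - \mathbb{E}[\hat\theta_g - \hat\theta_0]$. The first step is to extract from Assumption~\ref{assumptions}(1) two facts that drive everything else: (i) the Cauchy--Schwarz bound $|c_g| \le \sqrt{\mathbb{E}[e_g(n)^2]} = o(1/\sqrt n)$, giving an $o(1/\sqrt n)$ deterministic bias for $\hat\theta_g$ relative to $\theta_g$; and (ii) the CLT $\sqrt n X \to Z \sim \mathcal{N}(0, \sigma^2)$ in distribution, with $\sigma^2 = \lim_n n\,\text{Var}(\hat\theta_g - \hat\theta_0)$, obtained from the asymptotically linear representation of $\hat\theta_g - \hat\theta_0$.

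With these ingredients the next step is to verify the algebraic identity
\[
\hat R(g) - R(g) = \bigl(X^2 - \mathbb{E}[X^2]\bigr) + 2bX + \mathcal R_n + \mathcal V_n,
\]
where $\mathcal R_n = 2(c_g - c_0)X - 2(b + c_g)c_0 + c_0^2$ collects cross terms involving the $o(1/\sqrt n)$ biases $c_g, c_0$, and $\mathcal V_n = [\text{Var}(\hat\theta_g - \hat\theta_0) - \hat{\text{Var}}(\hat\theta_g - \hat\theta_0)] + [\hat{\text{Var}}(\hat\theta_g) - \text{Var}(\hat\theta_g)]$ is $o_P(1/n)$ by Assumption~\ref{assumptions}(2). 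In Case~1 ($b = 0$), multiplying by $n$ makes the remainders $n\mathcal R_n, n\mathcal V_n$ vanish in probability, while $n(X^2 - \mathbb{E}[X^2])$ converges by the continuous mapping theorem to $Z^2 - \sigma^2$, a zero-mean random variable. In Case~2 ($b \neq 0$), multiplying by $\sqrt n$ instead makes $\sqrt n(X^2 - \mathbb{E}[X^2]) = O_P(1/\sqrt n)$ vanish along with the remainders, and the leading surviving term is $2b\sqrt n X$, which converges in distribution to $2bZ$, again zero-mean.

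The hard part will be the bookkeeping needed to show $\mathcal R_n$ is negligible at the correct rate in each case. The critical ingredient --- easy to overlook --- is that Assumption~\ref{assumptions}(1) is strictly stronger than the usual in-probability form of asymptotic linearity, because it controls $\mathbb{E}[e_g(n)^2]$ rather than only $e_g(n)^2 = o_P(1/n)$. This is exactly what delivers the Cauchy--Schwarz bound $|c_g| = o(1/\sqrt n)$; under plain asymptotic linearity the cross term $\sqrt n \cdot b \cdot c_0$ could fail to vanish and the Case~2 claim would break. Everything else --- the $\mathcal V_n$ term, the $X^2$ fluctuation, and the CLT --- is a direct consequence of the two stated assumptions.
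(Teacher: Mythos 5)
Your proof is correct and follows essentially the same route as the paper: expand $(\hat\theta_g-\hat\theta_0)^2$ via the asymptotically linear representation, replace the variance estimators by their limits using Assumption~\ref{assumptions}(2), and let the CLT drive the zero-mean limit ($Z^2-\sigma^2$ in Case~1, $2bZ$ in Case~2). Your only departure is organizational — an exact identity with explicit remainder terms $\mathcal{R}_n,\mathcal{V}_n$ rather than the paper's direct substitution with $o_P$ absorption — and your observation that the $L^2$ control of $e_g(n)$ is what makes the cross terms negligible is exactly the point the paper emphasizes when stating the assumption.
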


This result shows that the estimator is asymptotically unbiased, which is important for the theoretical justification of the approach. %
The major strength of the proposed approach will become apparant in the next section, where we compare asymptotic bias and variance to the cross-validation criterion \eqref{eq:18}.

\subsection{Comparison to cross-validation}

In this section, we will show that the proposed criterion has asymptotically lower variance than the cross-validation criterion \eqref{eq:18} if $\theta^{(g)} = \theta^{(0)}$. Furthermore, we show that cross-validation is generally biased for estimating the mean-squared error if $\theta^{(g)} = \theta^{(0)}$. Let us first discuss asymptotic bias of the naive approach. The short proof of this result can be found in the supplement.

\begin{theorem}[Asymptotic biasedness of $\tilde R(g)$]\label{theorem:biased}
Let Assumption~\ref{assumptions} hold and fix $K$.
  \begin{enumerate}
\item    If $\theta^{(g)} = \theta^{(0)}$,
\begin{equation*}
 n \left(\tilde R(g) - \frac{K}{K-1} \mathbb{E}[ \| \hat \theta^{(g)} - \theta^{(0)} \|_2^{2}] - \frac{K}{n}\text{Var}(\psi^{(0)}(D_1)) \right)
\end{equation*}
converges weakly to a random variable with mean zero. %
\item If $\theta^{(g)} \neq \theta^{(0)}$,
\begin{equation*}
 \sqrt{n} (\tilde R(g) - \mathbb{E}[ \|\hat \theta^{(g)} - \theta^{(0)} \|_2^{2}] )
\end{equation*}
converges weakly to a random variable with mean zero. %
\end{enumerate}
\end{theorem}

Comparing this result with Theorem~\ref{theorem:unbiased}, we see a major difference how the methods behave when $\theta^{(g)} = \theta^{(0)}$. The proposed criterion is asymptotically unbiased for the mean-squared error, while cross-validation is not (even modulo additive constants). The cross-validation approach computes the estimator on randomly chosen subsets of the data. Thus, intuitively, this approach overestimates the asymptotic variance of unbiased estimators compared to the mean-squared error of biased estimators. In practice, this means that the approach tends to choose biased estimators with low variance over unbiased estimators with slighly higher variance. This effect can also be observed in Section~\ref{sec:numerical-results}. %

Now, let us turn to the asymptotic variance of the model selection criteria. The proof of the following result can be found in the supplement.

\begin{theorem}[Asymptotic variance of model selection criteria]\label{theorem:variance}
Let Assumption~\ref{assumptions} hold and fix $K$.
\begin{enumerate}
\item If $\theta^{(g)} = \theta^{(0)}$, the asymptotic variance of $n ( \hat R(g) - \mathbb{E}[\|\hat \theta^{(g)} - \theta^{(0)}\|_2^{2}] )^2$ is strictly lower than the asymptotic variance of $n (\tilde R(g) - \frac{K}{K-1} \mathbb{E}[ \| \hat \theta^{(g)} - \theta^{(0)} \|_2^{2}] - \frac{K}{n}\text{Var}(\psi^{(0)}(D_1)))$,
\item If $\theta^{(g)} \neq \theta^{(0)}$, the asymptotic variance of $\sqrt{n} ( \hat R(g)- \mathbb{E}[ \|\hat \theta^{(g)} - \theta^{(0)} \|_2^{2}])$ is equal to the asymptotic variance of $\sqrt{n} (\tilde R(g)- \mathbb{E}[ \|\hat \theta^{(g)} - \theta^{(0)} \|_2^{2}] )$.
\end{enumerate}  
\end{theorem}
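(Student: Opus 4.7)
The plan is to expand both criteria via Assumption~\ref{assumptions}(1) and to use Assumption~\ref{assumptions}(2) to show that the variance estimators enter only as deterministic shifts in the limit: since $n\hat{\text{Var}}(\cdot) = \lim_n n\text{Var}(\cdot) + o_P(1)$, these terms contribute no randomness to $n(\hat R(g) - R(g))$ asymptotically. After this reduction, it suffices to compare the asymptotic variances of the quadratic pieces $n(\hat\theta_g - \hat\theta_0)^2$ and $\tfrac{1}{K}\sum_k n(\hat\theta_g(D^{1,k}) - \hat\theta_0(D^{0,k}))^2$, each centered at its deterministic limit.

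For Part 2 ($\theta_g \neq \theta_0$), set $\Delta = \theta_g - \theta_0$ and Taylor-expand each squared difference about $\Delta^2$. The leading stochastic term in $(\hat\theta_g - \hat\theta_0)^2 - \Delta^2$ is $(2\Delta/n)\sum_i(\psi_g - \psi_0)(D_i) + O_P(1/n)$. Applying the same expansion fold-by-fold to $\tilde R(g)$ and using the counting identity that every observation appears in exactly $K-1$ training folds (denominator $n(K-1)/K$) and exactly one test fold (denominator $n/K$), the average over $k$ reproduces the identical leading term. Hence $\sqrt n(\hat R(g) - R(g))$ and $\sqrt n(\tilde R(g) - R(g))$ share the same Gaussian limit with variance $4\Delta^2\text{Var}(\psi_g - \psi_0)$.

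For Part 1 ($\theta_g = \theta_0$), put $Z_n = \sqrt n(\hat\theta_g - \hat\theta_0)$ and $W_{n,k} = \sqrt n(\hat\theta_g(D^{1,k}) - \hat\theta_0(D^{0,k}))$; Assumption~\ref{assumptions}(1) and the multivariate CLT give joint convergence of $(Z_n, W_{n,1}, \dots, W_{n,K})$ to a centered Gaussian $(Z, W_1, \dots, W_K)$. Apply the algebraic identity $\tfrac{1}{K}\sum_k W_{n,k}^2 = \bar W_n^2 + \tfrac{1}{K}\sum_k (W_{n,k} - \bar W_n)^2$; the counting argument from Part 2 applied to the linearization yields $\bar W_n = Z_n + o_P(1)$. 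A direct covariance computation on the per-fold scaled partial sums $U_j = n^{-1/2}\sum_{i \in D^{0,j}}\psi_g(D_i)$ and $V_j = n^{-1/2}\sum_{i \in D^{0,j}}\psi_0(D_i)$, which are independent across $j$, gives $\text{Cov}(W_k - Z, Z) = 0$, and joint Gaussianity upgrades this to full independence of $Z$ from the deviation vector $(W_k - Z)_k$. The asymptotic variance of $n(\tilde R(g) - R(g))$ therefore decomposes as $\text{Var}(Z^2) + \text{Var}(\tfrac{1}{K}\sum_k(W_k - Z)^2)$, while that of $n(\hat R(g) - R(g))$ equals just $\text{Var}(Z^2)$.

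The main obstacle, and the source of the \emph{strict} inequality, is verifying that the correction $\text{Var}(\tfrac{1}{K}\sum_k(W_k - Z)^2)$ is positive. It is a nonnegative quadratic form in jointly Gaussian variables, so it is degenerate at $0$ only if $W_k = Z$ a.s.\ for every $k$. Expanding $W_k - Z$ in the basis $(U_j, V_j)$, the $j = k$ contribution equals $-U_k - (K-1)V_k$ with variance $\sigma_g^2/K + (K-1)^2\sigma_0^2/K + 2(K-1)\sigma_{g0}/K$; viewed as a quadratic in $K-1$, its discriminant $4(\sigma_{g0}^2 - \sigma_g^2\sigma_0^2)$ is strictly negative precisely because Assumption~\ref{assumptions}(1) stipulates nonzero variances and $|\text{Cor}(\psi_g(D_1), \psi_0(D_1))| \neq 1$. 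Thus the correction is strictly positive, yielding the strict inequality claimed in Part 1.
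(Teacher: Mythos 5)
Your proof is correct, and Part 1 takes a genuinely different route from the paper's. The paper reduces the comparison to a Gaussian inequality (its Lemma~\ref{lemma:gauss2}) and proves it by regressing $X_i$ on $Z_i$, splitting each side into three variance contributions, and verifying three separate inequalities via two auxiliary lemmas and a completed square. You instead use the exact ANOVA identity $\tfrac1K\sum_k W_{n,k}^2=\bar W_n^2+\tfrac1K\sum_k(W_{n,k}-\bar W_n)^2$, identify $\bar W_n$ with the full-sample contrast $Z_n$ via the counting argument, and then exploit the (correct, as I verified from the fold-wise decomposition into $U_j,V_j$) orthogonality $\text{Cov}(W_k-Z,Z)=0$ together with joint Gaussianity to conclude that the limiting variance of the cross-validation criterion \emph{equals} $\text{Var}(Z^2)+\text{Var}\bigl(\tfrac1K\sum_k(W_k-Z)^2\bigr)$. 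This buys you an explicit formula for the variance gap and makes its source transparent (fold-to-fold fluctuation around the full-sample statistic), whereas the paper's computation only establishes the inequality; your strictness argument via the negative discriminant of $\sigma_0^2t^2+2\sigma_{g0}t+\sigma_g^2$ at $t=K-1$ uses exactly the assumptions $\text{Var}(\psi_g),\text{Var}(\psi_0)>0$ and $|\text{Cor}(\psi_g,\psi_0)|\neq1$ in the same role the paper uses them. Part 2 coincides with the paper's argument: both criteria share the leading term $2(\theta_g-\theta_0)n^{-1/2}\sum_i(\psi_g-\psi_0)(D_i)$, hence the same limit variance $4(\theta_g-\theta_0)^2\text{Var}(\psi_g-\psi_0)$.
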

Roughly speaking, this theorem shows that the proposed criterion $\hat R(g)$ has equal or lower asymptotic variance than the cross-validation criterion $\tilde R(g)$. This means that risk estimates based on the cross-validation criterion are harder to interpret than the risk estimates of the proposed criterion, since their variation will be larger. The difference in variance is particularly large if the splits are imbalanced, i.e.\ if the test data sets have much smaller or much larger sample size than the training data sets. Intuitively, if the validation data sets have small sample size, validation becomes unstable. However, if the validation data set is large, the estimator on the training data sets becomes unstable. This tradeoff can be avoided by estimating bias and variance separately, as done in the proposed approach. 

Note that for understanding the excess risk of the resulting estimator it is not sufficient to study the asymptotic variance of the criterion itself. Thus, we study the asymptotic risk in Section~\ref{sec:asymptoticrisk} and give a finite-sample bound for the excess risk in Section~\ref{sec:finite-sample-bound}.

\subsection{Asymptotic risk}\label{sec:asymptoticrisk}

Here and in the following, we focus on the case where the number of models $G$ is small and fixed and $n \rightarrow \infty$.  %
First, we investigate the asymptotic behaviour of the proposed procedure in the case where the number of models is fixed and $n \rightarrow \infty$. The proof of the following result can be found in the supplement. 
\begin{corollary}[Asymptotic risk of selected model]\label{cor:opt}
Let Assumption~\ref{assumptions} hold. Consider a finite and fixed number of estimators $g = 0, \ldots,G$. Let%
\begin{equation*}
   \bar g = \arg \min \hat R^{\text{mod}}(g).
\end{equation*} For $n \rightarrow \infty$,
  \begin{equation*}
     \mathbb{P}[ \theta^{(\bar g)} = \theta^{(0)}] \rightarrow 1,
   \end{equation*}
   and
   \begin{equation*}
     \mathbb{P}[ R(\bar g) \le R(0)] \rightarrow 1.
   \end{equation*}
 \end{corollary}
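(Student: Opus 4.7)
The plan is to establish both claims by separately analyzing the behavior of $\hat R^{\text{mod}}(g)$ on the ``good'' set $G_0 := \{g : \theta_g = \theta_0\}$ and on its complement, then exploiting the definition of $\hat g$ as the minimizer together with the consistency of the variance estimators. First I would note the reference value $\hat R^{\text{mod}}(0) = \hat{\text{Var}}(\hat\theta_0)$, which by Assumption~\ref{assumptions}(2) satisfies $n \hat R^{\text{mod}}(0) \to v_0 := \lim_n n\,\text{Var}(\hat\theta_0)$ in probability.

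For the first claim, I would classify candidates by whether $\theta_g = \theta_0$. For $g \in G_0$, the asymptotic linearity in Assumption~\ref{assumptions}(1) gives $\hat\theta_g - \hat\theta_0 = O_P(n^{-1/2})$, hence $(\hat\theta_g - \hat\theta_0)^2 = O_P(1/n)$, and combined with $\hat{\text{Var}}(\hat\theta_g - \hat\theta_0) = O_P(1/n)$ and $\hat{\text{Var}}(\hat\theta_g) = O_P(1/n)$ this yields $\hat R^{\text{mod}}(g) = O_P(1/n)$. For $g \notin G_0$, $(\hat\theta_g - \hat\theta_0)^2 \to (\theta_g - \theta_0)^2 > 0$ in probability while the variance terms vanish, so $\hat R^{\text{mod}}(g)$ is bounded below by a positive constant with probability tending to $1$. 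Since $G$ is finite and $\hat R^{\text{mod}}(0) \to 0$, we obtain $\mathbb{P}[\hat R^{\text{mod}}(0) < \hat R^{\text{mod}}(g)] \to 1$ for every $g \notin G_0$, and thus $\mathbb{P}[\hat g \in G_0] \to 1$, which is exactly $\mathbb{P}[\theta_{\hat g} = \theta_0] \to 1$.

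For the second claim, I would work on the high-probability event $\{\hat g \in G_0\}$. By definition of $\hat g$, $\hat R^{\text{mod}}(\hat g) \le \hat R^{\text{mod}}(0) = \hat{\text{Var}}(\hat\theta_0)$. Dropping the nonnegative truncated-bias term of $\hat R^{\text{mod}}(\hat g)$ gives $\hat{\text{Var}}(\hat\theta_{\hat g}) \le \hat{\text{Var}}(\hat\theta_0)$, hence (multiplying by $n$ and using Assumption~\ref{assumptions}(2)) $v_{\hat g} \le v_0 + o_P(1)$, where $v_g := \lim_n n\,\text{Var}(\hat\theta_g)$. Because $\hat g$ takes values in a finite set with deterministic $v_g$, this forces $\mathbb{P}[v_{\hat g} \le v_0] \to 1$. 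Finally, for any $g \in G_0$ the bias is $o(n^{-1/2})$ so $n R(g) = v_g + o(1)$, which combined with $v_{\hat g} \le v_0$ yields $R(\hat g) \le R(0)$ asymptotically.

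The main obstacle is the boundary case $v_g = v_0$ for some $g \in G_0 \setminus \{0\}$, where the inequality $R(\hat g) \le R(0)$ is determined by lower-order terms (the $o(1/n)$ bias and variance corrections). Here the tie-breaking rule introduced after equation~\eqref{eq:10}, which among minimizers of $\hat R^{\text{mod}}$ picks the $g$ minimizing $(\hat\theta_g - \hat\theta_0)^2$, together with the fact that $g=0$ is itself always a minimizer candidate achieving equality, is what keeps the statement honest. Handling this tie case cleanly, and verifying that the $o_P$ bounds transfer uniformly over the finite index set $\{0,\ldots,G\}$ so that the finiteness argument for $\hat g$ is valid, is the only piece that requires care; the rest reduces to bookkeeping with the rates provided by Assumption~\ref{assumptions}.
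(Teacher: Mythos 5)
Your proposal is correct and follows essentially the same route as the paper's proof: the first claim via the rate gap between $\hat R^{\text{mod}}(g)=O_P(1/n)$ when $\theta_g=\theta_0$ versus a positive limit otherwise, and the second claim by dropping the nonnegative truncated-bias term so that the minimizer must satisfy $\mathrm{Var}(\psi_{\hat g})\le \mathrm{Var}(\psi_0)$ asymptotically. Your writeup is in fact slightly more explicit than the paper's about the finite-index-set argument and about the boundary case $\mathrm{Var}(\psi_g)=\mathrm{Var}(\psi_0)$, which the paper's proof passes over silently.
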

 In words, for $n \rightarrow \infty$, the proposed method selects models with lower or equal risk than the baseline estimator $\hat \theta^{(0)}$. Interestingly, an analogous result does not hold for the cross-validation procedure \eqref{eq:18}. In Section~\ref{sec:numerical-results} we will see that even for relatively large $n$, the selected model by cross-validation may have risk that is significantly larger than the risk of the baseline procedure. %

\subsection{Confidence intervals}\label{sec:infer-after-model}

Deriving confidence intervals that are valid in conjuction with a model selection step is a challenging topic and has attracted substantial interest in recent years, see for example \citet{berk2013valid,taylor2015statistical}. Generally speaking, statistical inference after a model selection step can be unreliable if the uncertainty induced by the model selection step is ignored. In this section, we describe how to construct confidence intervals that take into account the uncertainty induced by model selection. Intuitively speaking, the challenge is that the final estimator is a discontinuous function of the data. To be more precise, the final estimator $\hat \theta^{(\bar g)}$ is not regular and not asymptotically normal. 

The goal in this section is to find $I_1$ and $I_2$ as a function of the data $D_1,\ldots,D_n$ such that
\begin{equation*}
  \mathbb{P}[ \hat \theta_j^{(\bar g)} - I_1 \le \theta_j^{(0)} \le \hat \theta_j^{(\bar g)} + I_2] \rightarrow 1- \alpha,
\end{equation*}
for some pre-determined $\alpha > 0$ and $j$ and where
\begin{equation*}
  \bar g = \arg \min_g \hat R^{\text{mod}}(g).
\end{equation*}
The following theorem shows how to construct confidence intervals in low-dimensional settings; i.e.\ in settings where the number of models $G$ is fixed and the sample size goes to infinity.
\begin{theorem}\label{thm:ci}
 Define $\hat \theta = (\hat \theta^{(0)}, \hat \theta^{(1)},\ldots,\hat \theta^{(G)}) \in \mathbb{R}^{(G+1)d}$ and $\theta = (\theta^{(0)},\theta^{(1)},\ldots,\theta^{(G)})$. Assume that 
 \begin{equation*}
  \sqrt{n}(\hat \theta - \theta) \rightarrow \mathcal{N}(0,\Sigma),
\end{equation*} 
for some positive definite $\Sigma$. Let $\hat \Sigma(D) = \hat \Sigma_n(D_1,\ldots,D_n)$ be a consistent estimator of $\Sigma \in \mathbb{R}^{(G+1)d \times (G+1)d}$ and ${\hat \sigma^{(g)}} = {\hat \sigma^{(g)}}(D_1,\ldots,D_n)$ be a consistent estimator of the asymptotic standard deviation of $\sqrt{n}(\hat \theta^{(g)} - \theta^{(g)})$ and $\hat \tau^{(g)} = \hat \tau^{(g)}(D_1,\ldots,D_n)$ be a consistent estimator of the asymptotic standard deviation of $\sqrt{n}(\hat \theta^{(g)}- \hat \theta^{(0)} - \theta^{(g)} + \theta^{(0)})$. With some abuse of notation, conditionally on the data $D=(D_1,\ldots,D_n)$ draw $(Z_0,\ldots,Z_G) \sim \mathcal{N}(\sqrt{n} \hat \theta(D), \hat \Sigma(D))$ with $Z_g \in \mathbb{R}^d$. We define the event %
\begin{equation*}
  A_g^{est} = \{ \max( \| Z_g - Z_0 \|_2^2 - \|\hat \tau^{(g)} \|_2^2,0) + \|\hat \sigma^{(g)} \|_2^2 < \min_{g' \neq g} \max( \| Z_{g'} - Z_0 \|_2^2 - \|\hat \tau^{(g')}\|_2^2,0) + \|\hat \sigma^{(g')}\|_2^2 \}.
\end{equation*}
Now for some fixed $j$ define
\begin{equation}\label{eq:bootstrap}
  b_{D_1,\ldots,D_n}(\beta) = \sum_g \mathbb{P}[ \{ Z_{g,j} - \sqrt{n}\hat \theta_{j}^{(0)} \le \beta  \} \cap A_g^{est}| D_1,\ldots,D_n].
\end{equation}
Then:
\begin{enumerate}
  \item For $n \rightarrow \infty$, with probability converging to one, the inverse $b^{-1}_{D_1,\ldots,D_n} : (0,1) \rightarrow \mathbb{R}$ is well-defined.
  \item For all $\alpha > 0$,
  \begin{equation*}
      \mathbb{P} \left[ \hat \theta_j^{(\bar g)}  - \frac{ b^{-1}_{D_1,\ldots,D_n}(1-\alpha/2)}{\sqrt{n}} \le \theta_{j}^{(0)} \le \hat \theta_j^{(\bar g)} - \frac{ b^{-1}_{D_1,\ldots,D_n}(\alpha/2)}{\sqrt{n}} \right] \rightarrow 1-\alpha.
  \end{equation*}
\end{enumerate}
\end{theorem}
Note that by definition the conditional distribution of $Z$ given $D_1,\ldots,D_n$ is known to the researcher. Also, the researcher often can construct estimators of the variances in parametric and semi-parametric settings via plug-in estimators of the influence function, see e.g.\ \citep{van2000asymptotic,tsiatis2007semiparametric}. In these cases, $b_{i,\alpha}(D_1,\ldots,D_n)$ can be computed by the researcher, for example by Monte-Carlo simulation. Thus, Theorem~\ref{thm:ci} allows us to construct asymptotically valid confidence intervals for the final estimator $\hat \theta^{(\bar g)}$ in many parametric and semi-parametric settings.

This result is different from standard asymptotic arguments in the sense that $\hat \theta^{(\bar g)} - \theta^{(0)}$ is not asymptotically Gaussian. However, as the result shows, it is  possible to recover the exact asymptotic distribution of $\hat \theta^{(\bar g)}$ in low-dimensional scenarios and use this information to conduct asymptotically valid statistical inference. We will evaluate the empirical performance of confidence intervals constructed via Theorem~\ref{thm:ci} in Section~\ref{sec:numerical-results}.

\subsection{Finite-sample bound}\label{sec:finite-sample-bound}

As discussed in Section~\ref{sec:asymptoticrisk}, the proposed method selects a model that is asymptotically no worse than the baseline estimator. However, there is no free lunch. In transitional regimes, for fixed $n$, the estimator can perform worse than the baseline estimator $\hat \theta^{(0)}$. This is to be expected from statistical theory, see for example the discussion of the Hodges-Le~Cam estimator on page 110 in \citet{van2000asymptotic}. This makes it important to understand in which cases we can expect reliable performance of the proposed model selection procedure. In the case $d=1$, a Bayesian bound \citep{gill1995applications} reveals that improving over the Cram\'er-Rao bound in some parts of the parameter space must lead to deteriorating performance in other parts of the parameter space. In the following we will provide a finite-sample bound that will show that for large $d$ the excess risk becomes negligible, uniformly over a set of distributions.

 \begin{theorem}\label{thm:uniform}
    Let $\hat \theta^{(g)} = \theta^{(g)} + \epsilon^{(g)} =  \theta^{(0)} + \delta^{(g)} + \epsilon^{(g)}$ where $\delta^{(g)} \in \mathbb{R}^d$ is a constant vector and $G > 1$. We assume that the $\epsilon_j^{(g)}$ are centered, independent and sub-Gaussian random variables with variance proxy $\eta_j^{(g)}/\sqrt{n}$, i.e.\ that $\mathbb{E}[\exp{s \epsilon_j^{(g)}}] \le \exp{ \frac{(\eta_j^{(g)})^2 s^2}{2n}}$ for all $g=1,\ldots,G$ and $j=1,\ldots,d$ and $s \in \mathbb{R}$. %
    Define $\tau_j^{(g)}$ as the standard deviation of $\sqrt{n}(\epsilon_j^{(g)} - \epsilon_j^{(0)})$ and $\sigma_j^{(g)}$ as the standard deviation of $\sqrt{n} \epsilon_j^{(g)}$.
    We assume that $ |\delta_j^{(g)}| \le c_0/ \sqrt{n}$ for some constant $c_0 > 0$ and $\delta^{(0)} = 0$. Define $ \iota_{n,d} := \max(\sup_{g,j} |(\hat \sigma_j^{(g)})^2 - (\sigma_j^{(g)})^2 |,\sup_{g,j} | (\hat \tau_j^{(g)})^2 - (\tau_j^{(g)})^2 | )$. Furthermore, assume that $\log G/d \le c_1$ for some constant $c_1 > 0$. Then, for every $\kappa > 0$ there exists a constant $C$ that may depend on $c_0$, $c_1$, $\eta_j^{(g)}$ and $\kappa$ (but not on $\delta$) such that with probability exceeding $1-\kappa$,
\begin{equation*}   
 \frac{n}{d}  \sum_{j=1}^d (\hat \theta_{j}^{(\bar g)} - \theta_{j}^{(0)} )^2  \le \min_g \frac{n}{d}  \sum_{j=1}^d (\hat \theta_{j}^{(g)} - \theta_{j}^{(0)} )^2 +  C \sqrt{\frac{\log G}{d} } + 6\iota_{n,d}.
\end{equation*} 
 \end{theorem}

 In particular, the excess risk due to model selection goes to zero as $ \frac{d}{\log G} \rightarrow \infty$ and if $\sup_{g,j} |(\hat \sigma_j^{(g)})^2 - \sigma_j^{(g)}| \rightarrow 0$ and $\sup_{g,j} |\hat \tau_j^{(g)} - \tau_j^{(g)}| \rightarrow 0$. In Section~\ref{sec:numerical-results} we will see in an example that the excess risk declines for growing dimension of the estimand.

\section{Applications}\label{sec:numerical-results}

In this section, we discuss applications of the proposed method. Compared to existing literature on model selection for causal effects, instead of selecting among nuisance parameter models, we consider shrinking between different functionals of the data generating distribution. As we will see, doing so can lead to drastic improvements in the mean-squared error. However, there is no free lunch. Compared to the baseline procedure, for fixed $n$, model selection can lead to increased risk in parts of the parameter space. Thus, in this section, we study the excess risk of the procedures across the parameter space.

In the following we will use potential outcomes to define causal effects \citep{rubin1974estimating,splawa1990application}. We are interested in the causal effect of a treatment $T \in \{0,1\}$ on an outcome $Y$. Let $Y(1)$ denote the potential outcome under treatment $T=1$ and $Y(0)$ the potential outcome under treatment $T=0$. We assume a superpopulation model, i.e.\ $Y(1)$ and $Y(0)$ are random variables. In the following, the goal is to estimate the average treatment effect within several subgroups, %
\begin{equation}\label{eq:5}
 \theta_s^{(0)} = \mathbb{E}[Y(1) - Y(0)|S=s].
\end{equation}
Many methods have been designed to estimate \eqref{eq:5} and these methods operate under a variety of assumptions. We present several applications that are based on different sets of assumptions for identifying \eqref{eq:5}. In each of the cases, we compare the proposed method (\ref{eq:10}, termed ``targeted selection'') with the cross-validation procedure \eqref{eq:18}  and with a baseline estimator. The code can be found at \href{http://www.github.com/rothenhaeusler/tms}{github.com/rothenhaeusler/tms}.

 \subsection{Observational studies}

 In observational studies, it is common practice to estimate causal effects  under the assumption of unconfoundedness and under the overlap assumption. Roughly speaking, the overlap assumption states that treatment assignment probabilities are bounded away from zero and one, conditional on covariates $X$. If these assumptions are met, it is possible to identify the average treatment effect via matching, inverse probability weighting, regression adjustment, or doubly robust methods \citep{hernan2010causal,imbens2009recent}. However, if the overlap is limited, estimating the average treatment effect can be unreliable. %

 To deal with the issue of limited overlap, researchers sometimes switch to different estimands such as the average effect on the treated (ATT) or the overlap weighted effect \citep{crump2006moving}. 
In the following, we will focus on the overlap-weighted effect as it is the causal contrast that can be estimated with the lowest asymptotic variance in certain scenarios \citep{crump2006moving}. The overlap-weighted effect is defined as
  \begin{equation*}
\theta^{(1)} =  \frac{\mathbb{E}[p(T=1|X)(1-p(T=1|X)) \tau(X) ]}{\mathbb{E}[p(T=1|X)(1-p(T=1|X)) ]},
\end{equation*}
where $\tau(x)= \mathbb{E}[Y(1) - Y(0)|X=x]$. Note that if the treatment effect is homogeneous $\tau(x) \equiv \text{const.} $,  then the overlap-weighted effect and the average treatment effect coincide, that means $\theta^{(1)} = \theta^{(0)}$. Thus, shrinkage towards an efficient estimator of the overlap effect is potentially beneficial under treatment effect homogeneity. 

We  investigate shrinking between estimators of the average treatment effect and the overlap-weighted effect in a data-driven way. The proposed model selection tool will be used to trade off bias and variance.

\subsubsection{The data set} We observe $1000$ independent and identically distributed draws $(Y_{i}(T_{i}),T_{i},X_{i},S_i)$ of a distribution $\mathbb{P}$, where the $X_{i}$ are covariates.  The data generating process was chosen such that there is limited overlap, i.e.\ $\mathbb{P}[T=1|X=0] \approx 0$ and that the unconfoundedness assumptions, that means $(Y(0),Y(1)) \perp T | X$ \citep{rosenbaum1983assessing}. As discussed above, the causal effect can be estimated via doubly robust methods such as augmented inverse probability weighting, among others \citep{hernan2010causal}. The data are generated according to the following equations:
\begin{align}\label{eq:19}
  \begin{split}
    S \text{ }&\text{drawn from } \{ 1,2,3 \} \text{ uniformly at random } \\
    \epsilon_{Y} &\sim \mathcal{N}(0,1) \\
    X &\sim \text{Ber}(.5) \\
   T &\sim \begin{cases}
     \text{Ber}(.7) & \text{ if } X=1,\\
     \text{Ber}(.05) & \text{ if } X=0,
   \end{cases} \\
  Y(t) &= \frac{X}{2} + t + 3 t \gamma^{2} X  + .1t \cdot 1_{S=1} + .2t \cdot 1_{S=2} - .1t \cdot 1_{S=3}+ \epsilon_{Y},
\end{split}    
\end{align}
where $\gamma \in [0,1]$. For $\gamma = 0$, the treatment effect is homogeneous across $X$. Thus, for $\gamma=0$, the overlap-weighted effect coincides with the average treatment effect. %

\subsubsection{The estimators} In the following, we compute the estimators for each group $S=s$ separately on the data set $\{ i : S_i = s \}$. For reasons of readability, notationally we suppress the dependence of the conditional probabilities and conditional expectations on $s$. We can estimate that average treatment effect via augmented inverse probability weighting \citep{robins1994estimation},  %
\begin{equation*}
   \hat \theta_s^{(0)} = \hat \mu_{1} - \hat \mu_{0},
\end{equation*}
where 
\begin{equation*}
  \hat \mu_{a} = \frac{1}{n} \sum_{i=1}^{n}   \frac{Y_{i} 1_{T_{i} = a}}{\hat p(T_{i}=a |X_{i})} -  \frac{1_{T_{i}=a} - \hat p(T_{i}=a|X_{i})}{\hat p(T_{i} =a  | X_{i})} \hat Q(X_{i}, a),
\end{equation*}
and where $\hat Q(x,t)$ is the empirical mean of $Y$ given $X=x$ and $T=t$ and $\hat p(\cdot | \cdot)$ are empirical probabilities. %
Similarly as above, we can estimate the overlap effect by
\begin{equation*}
 \hat \theta_s^{(1)} = \frac{\hat \eta_{1} - \hat \eta_{0}}{\frac{1}{n} \sum_{i} \hat p(T_{i} =1 |X_{i}) (1 - \hat p(T_{i} = 1 | X_{i}))},
\end{equation*}
where
\begin{align*}
  \hat \eta_{a} &= \frac{1}{n} \sum_{i=1}^{n}   Y_{i} 1_{T_{i} = a} (1 - \hat p(T_{i} =a  | X_{i})) \\
  & - (1_{T_{i}=a} - \hat p(T_{i}=a|X_{i})) (1-\hat p(T_{i}=a |X_{i})) \hat Q(X_{i}, a).
\end{align*}
For $w \in \{1/10,\ldots,9/10\}$ we define
\begin{equation}
  \hat \theta(w) = (1-w) \hat \theta^{(0)} + w \hat \theta^{(1)}.
\end{equation}
For $\gamma \approx 0$, due to treatment effect homogeneity we expect $\mathbb{E}[( \hat \theta(1) - \theta^{(0)})^{2}] < \mathbb{E}[( \hat \theta^{(0)} - \theta^{(0)})^{2}]$. For $\gamma \approx 1$, we expect $\mathbb{E}[( \hat \theta(1) - \theta^{(0)})^{2}] > \mathbb{E}[( \hat \theta(0) - \theta^{(0)})^{2}]$. In the first case, the optimal estimator is $\hat \theta(w)$ with $w \approx 0$. In the second case the optimal estimator is $\hat \theta(w) $ with $w \approx 1$.

\subsubsection{Results} The mean-squared error of the estimator selected by targeted selection and 10-fold cross-validation is depicted in Figure~\ref{fig:hetero}. To further improve the stability of cross-validation, we repeat the data splitting 10 times on randomly shuffled data. To study the influence of dimension $d$ on the performance of the model selection procedure, we show how the method performs on the full data set (left-hand side) and how the method performs if it only has access to the subset of observations $i$ for which $S_i = 1$ (right-hand side). Results are averaged across $100$ simulation runs. The method performs better for $d=3$  (left plot) than for $d=1$ (right plot), which is consistent with Theorem~\ref{thm:uniform}. For $d=3$, targeted selection performs similarly or better than the baseline estimator for all $\gamma$. For $d=1$, the proposed approach performs worse than the baseline estimator for $.4 < \gamma < .9$.  %

We also evaluated the realized coverage of confidence intervals with nominal coverage $95 \%$ as described in Section~\ref{sec:infer-after-model}. Equation~\ref{eq:bootstrap} is estimated using the non-parametric bootstrap. Across $\gamma \in [0,1]$, the minimal realized coverage is $93 \%$. The maximal realized coverage is $97 \%$. Averaged across all $\gamma \in [0,1]$, the overall coverage is $94.9 \%$.

\begin{figure}
  \begin{minipage}{.49\textwidth}
    \centering
    \includegraphics[scale=.57]{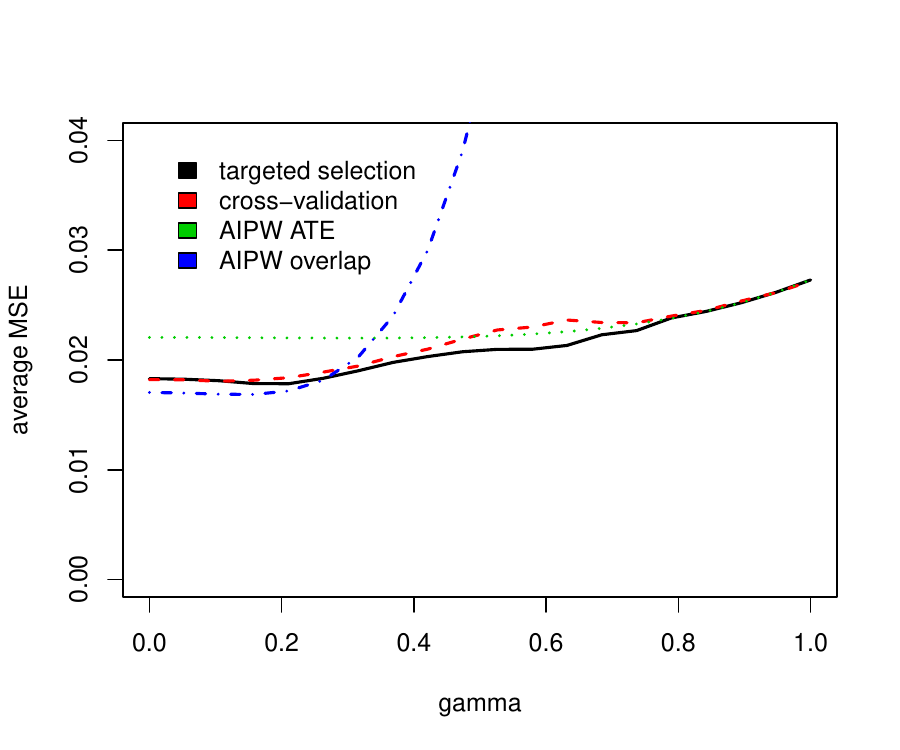}
  \end{minipage} \hfill
  \begin{minipage}{.49\textwidth}
    \centering
    \includegraphics[scale=.57]{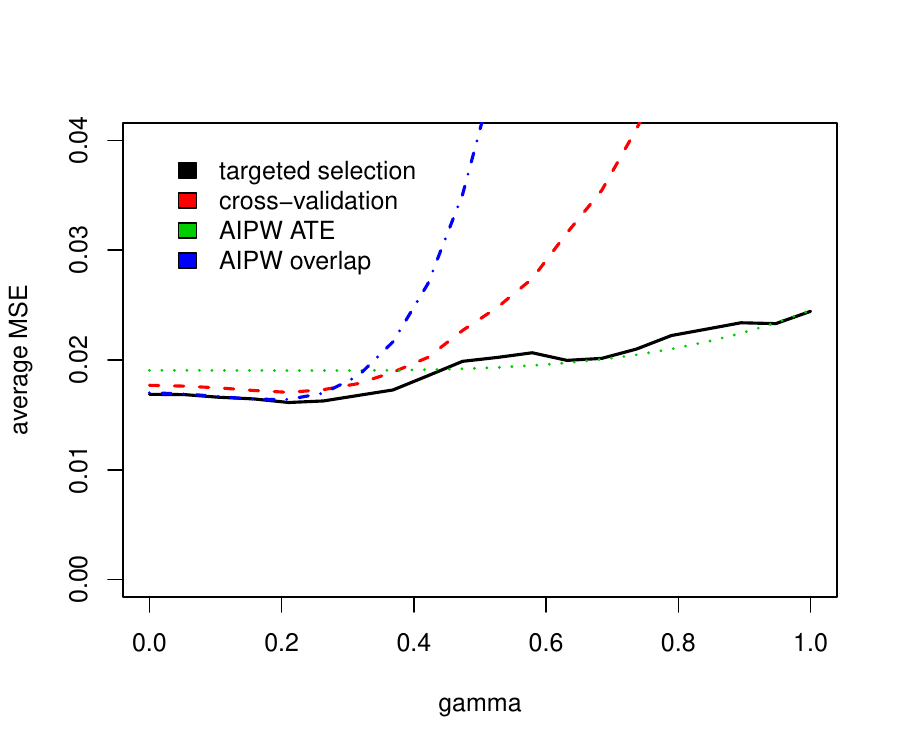}
  \end{minipage}
\begin{center}
  \caption{ Mean-squared error $R(\hat w)$ where $\hat w$ is selected via the cross-validation criterion or  targeted selection. The data are drawn according to equation~\eqref{eq:19}. Cross-validation and targeted selection are used to shrink between the AIPW ATE estimator and the AIPW overlap estimator. On the left-hand side, we show the error $  \| \hat \theta_\bullet^{(\overline g)} - \theta_\bullet^{(0)} \|_2^2/3$. On the right-hand side the method is run on the subset of observations $i$ for which $S_i=1$. Consistent with the theory presented in Section~\ref{sec:finite-sample-bound}, the maximum excess risk is smaller for $d=3$ (left) than for $d=1$ (right).  %
    The proposed method performs equal or better than cross-validation across most $\gamma \in [0,1]$.
  }\label{fig:hetero}
\end{center}
\end{figure}

\subsection{Instrumental variables and data fusion}\label{sec:stab-instr-vari-1}

 The instrumental variables approach is a widely-used method to estimate causal effect of a treatment $T$ on a target outcome $Y$ in the presence of confounding \citep{wright1928tariff,bowden1990instrumental,angrist1996identification}. Roughly speaking, the method relies on a predictor $I$ (called the instrument) of the treatment $T$ that is not associated with the error term of the outcome $Y$. We will not discuss the assumptions behind instrumental variables in detail, but refer the interested reader to \citet{hernan2010causal}. We will focus on the case, where $I$, $T$ and $Y$ are one-dimensional. Under IV assumptions and linearity, the target quantity can be re-written as
\begin{equation*}
  \theta^{(0)} = \mathbb{E}[Y(1) - Y(0)] =  \frac{\text{Cov}(I,Y)}{\text{Cov}(I,T)}.
\end{equation*}
Estimating this quantity can be challenging if the instrument is weak, i.e.\ if $\text{Cov}(I,T) \approx 0$. In this case, the approach can benefit from shrinkage towards the ordinary least-squares solution \citep{nagar1959bias,theil1961economic,rothenhausler2018anchor,jakobsen2020distributional}. Doing so may decrease the variance but generally introduces bias. We will focus on the case where we have some additional observational data, where we observe $T$ and $Y$, but where the instrument $I$ is unobserved.

\subsubsection{The data set} We draw $500$ i.i.d.\ observations according to the following equations:
\begin{align}\label{eq:30}
\begin{split}
  S \text{ }&\text{drawn from } \{ 1,2,3 \} \text{ uniformly at random } \\
  I,H,\epsilon_{T},\epsilon_{Y} &\sim \mathcal{N}(0,1)  \\
  T &= \frac{I}{2}  + H + \epsilon_{T} \\
  Y(t) &= t - \gamma^{2} H  + .1  t\cdot 1_{S=1} + .2 t \cdot 1_{S=2} - .1 t\cdot 1_{S=3} + \epsilon_{Y}
\end{split}    
\end{align}
We vary $ \gamma \in [0,2]$, which corresponds to the strength of confounding between $T$ and $Y$. %
We observe $(T_{i},Y_{i}(T_{i}),I_{i})$ for $i=1,\ldots,500$. We also assume that we have access to a larger  data set $i=501,\ldots,1000$ with incomplete observations. To be more precise, on this data set we only observe $X$ and $Y$, but not the instrument $I$. Formally, for $i = 501,\ldots,1000$ we observe $(T_{i},Y_{i}(T_{i}))$ drawn according to equation~\eqref{eq:30}. 
\subsubsection{The estimators} In the linear case, for each subset $S=s$, the instrumental variables estimator can be written as %
\begin{align*}%
  \begin{split}
  (\hat b_{IV})_s &= \frac{\hat{\text{Cov}}(I,Y|S=s)}{\hat{\text{Cov}}(I,T | S=s)}, %
\end{split}  
\end{align*}
where $\hat{\text{Cov}}$ denotes the empirical covariance over the observations $i=1,\ldots,500$.   %
To deal with the weak instrument, we will consider shrinking the instrumental variables estimator torwards ordinary least-squares,
\begin{equation*}
  (\hat b_{\text{OLS}})_s = \argmin_{b} \min_{c} \hat{\E}[(Y - T b -c)^{2}|S=s], 
\end{equation*}
where $\hat{\E}$ denotes the empirical expectation over the observations $i=1,\ldots,1000$. Shrinking towards the ordinary least-squares solution will introduce some bias if $\gamma \neq 0$, but potentially decreases variance. As candidate estimators, for any $w \in \{0/10,1/10,\ldots,10/10\}$ we consider convex combinations of OLS and IV,
\begin{equation*}
  \hat \theta(w) = w \hat b_{\text{OLS}} + (1-w)  \hat b_{\text{IV}}.
\end{equation*}

\subsubsection{Results} The mean-squared error of the estimator selected by targeted selection and cross-validation is depicted in Figure~\ref{fig:IV}. We use 10-fold cross-validation. To further improve the stability of cross-validation, we repeat the data splitting 10 times on randomly shuffled data. Results are averaged across $100$ simulation runs. Over the entire range $\gamma \in [0,2]$, the proposed approach outperforms cross-validation. For $\gamma \approx 0$ and $\gamma > 1$, targeted selection outperforms the IV approach. For $.5 < \gamma  < 1$, the proposed approach performs worse than the IV approach. We evaluate the realized coverage of confidence intervals with nominal coverage $95 \%$ as described in Section~\ref{sec:infer-after-model}. Equation~\ref{eq:bootstrap} is estimated using the non-parametric bootstrap. Across $\gamma \in [0,2]$, the minimal realized coverage was $89 \%$. The maximal realized coverage was $96 \%$.  Averaged across all $\gamma \in [0,2]$, the overall realized coverage was $93 \%$. %
 
\begin{figure}
  \centering
  \includegraphics[scale=.5]{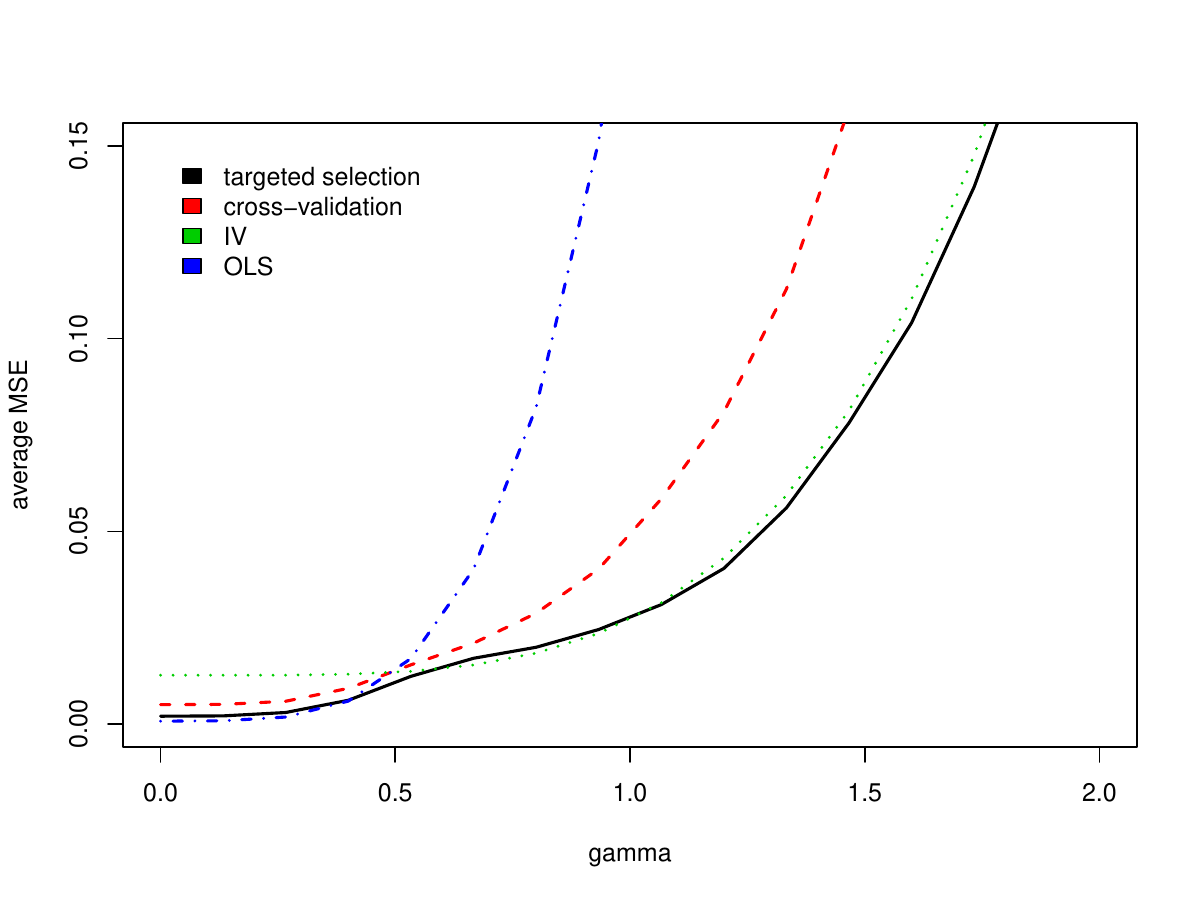}
  \caption{ Mean-squared error $R(\hat w)$ where $\hat w$ is selected via the cross-validation criterion or targeted selection. The data are drawn according to equation~\eqref{eq:30}.  Cross-validation and targeted selection is used to stabilize the instrumental variables approach by shrinking the estimate towards ordinary least-squares. The proposed method performs equal or better than cross-validation across all $\gamma \in [0,2]$. %
  }\label{fig:IV}
\end{figure}

\subsection{Experiment with proxy outcome}

One of the most popular estimators for causal effects in experimental settings is difference-in-means. To improve variance, it is possible to adjust for pre-treatment covariates, see for example \citet{lin2013agnostic}. This raises the question whether post-treatment covariates can be used to improve the precision of causal effect estimates. This is indeed the case under additional assumptions. For example, in some cases, the treatment effect can be written as the product
\begin{equation}\label{eq:12}
 \theta^{(0)} = \mathbb{E}[Y|T=1] - \mathbb{E}[Y|T=0] =  \theta_{T \rightarrow P} \cdot \theta_{P \rightarrow Y},
\end{equation}  
where  $\theta_{T \rightarrow P} =  \mathbb{E}[P|T=1] - \mathbb{E}[P|T=0]$ is the effect of the treatment on some surrogate or proxy outcome $P \in \{0,1\}$; and  $\theta_{P \rightarrow Y} =\mathbb{E}[Y|P=1] - \mathbb{E}[Y|P=0]$ is the effect of the proxy on the outcome. It is well-known that estimators that make use of such decompositions can outperform the standard difference-in-means estimator in terms of asymptotic variance \citep{tsiatis2007semiparametric,athey2019surrogate,guo2020efficient}. However, doing so can introduce bias if equation~\eqref{eq:12} does not hold. We will use the proposed model selection procedure to shrink between difference-in-means and an estimator that is unbiased if the treatment effect decomposition in equation~\eqref{eq:12} holds. %

\subsubsection{The data set} We consider a simple experimental setting with a post-treatment variable $P$. For simplicity, let us consider an experiment with binary treatment $T \in \{0,1\}$, a binary proxy outcome $P \in \{0,1\}$ and outcome $Y$.  We draw $200$ i.i.d.\ observations according to the following equations:
\begin{align}\label{eq:20}
  \begin{split} 
    S \text{ }&\text{drawn from } \{ 1,2,3 \} \text{ uniformly at random } \\
    T &\sim \text{Ber}(.5) \\
  \epsilon_{P},\epsilon_{Y} &\sim \mathcal{N}(0,1)  \\
  P(t) &= 1_{\epsilon_{p} \le t} \\
  Y(t) &=  P(t)   + \gamma^2 \left(.1t \cdot 1_{S=1} + .2t \cdot 1_{S=2} - .1t \cdot 1_{S=3} \right) + \epsilon_{Y}
\end{split}    
\end{align}
For $\gamma = 0$, the outcome $Y(T)$ is conditionally independent of the treatment, given the proxy $P(T)$. In this case, the average treatment effect can be written in product form, $\theta^{(0)} = \theta_{T \rightarrow P} \cdot \theta_{P \rightarrow Y}$, and this decomposition can be leveraged for estimation. For $\gamma \neq 0$, this decomposition does not hold.

\subsubsection{The estimators} The standard estimator to estimate causal effects from experiments is difference-in-means,
\begin{equation}\label{eq:diffinmeans}
  \hat \theta^{(0)} %
  =\frac{1}{ \sum T_{i}}\sum_{i : T_{i} = 1} Y_{i}  - \frac{1}{\sum (1-T_{i})} \sum_{i : T_{i} = 0} Y_{i}.
\end{equation}
If the proxy outcome is a valid surrogate, i.e.\ if
\begin{equation*}
   Y \perp T | P,
\end{equation*}
we can rewrite $\theta^{(0)}$ as
\begin{align*}
  \theta^{(0)} &= \mathbb{E}[Y|T=1] - \mathbb{E}[Y|T=0]  \\
  &=  \mathbb{E}[\mathbb{E}[Y|P=1] P + \mathbb{E}[Y|P=0] (1-P) |T=1] \\
         & \qquad - \mathbb{E}[\mathbb{E}[Y|P=1] P + \mathbb{E}[Y|P=0] (1-P) |T=0] \\
             &= \left(  \mathbb{E}[Y|P=1] - \mathbb{E}[Y|P=0]\right) \left( \mathbb{E}[P|T=1]  - \mathbb{E}[P|T=0]  \right)
  \\
  &= \theta_{T \rightarrow P} \cdot \theta_{P \rightarrow Y}.
\end{align*}
Thus, in this case, we can also consider the product estimator
\begin{align}\label{eq:products}
\begin{split}  
  \hat \theta^{(1)} &= \left(\frac{1}{ \sum T_{i}}\sum_{i : T_{i} = 1} P_{i}  - \frac{1}{\sum (1-T_{i})} \sum_{i : T_{i} = 0} P_{i} \right)  \\
  & \qquad \cdot \left(\frac{1}{ \sum P_{i}}\sum_{i : P_{i} = 1} Y_{i}  - \frac{1}{\sum (1-P_{i})} \sum_{i : P_{i} = 0} Y_{i} \right)
\end{split}  
\end{align}
On each subset $\{ i : S_i = s \}$ we compute \eqref{eq:diffinmeans} and \eqref{eq:products}, yielding $\hat \theta_s^{(1)}$ and $\hat \theta_s^{(0)}$ for $s = 1,2,3$. We shrink between these two vectors, i.e. for $w \in \{1/10,\ldots,9/10\}$ we define
\begin{equation*}
  \hat \theta(w) =   (1-w) \hat \theta^{(0)} + w \hat \theta^{(1)}.
\end{equation*}

\subsubsection{Results} The mean-squared error of the estimator selected by targeted selection and cross-validation is depicted in Figure~\ref{fig:proxy}. We use 10-fold cross-validation. To improve the stability of cross-validation, we repeat the data splitting 10 times on randomly shuffled data. Results are averaged across $100$ simulation runs. Similarly as above, targeted selection performs similar or better than cross-validation. Targeted selection performs better than the difference-in-means for $\gamma < 1$. For $\gamma \ge 1$, targeted selection approaches the performance of difference-in-means. We evaluated the realized coverage of confidence intervals with nominal coverage $95 \%$ as described in Section~\ref{sec:infer-after-model}. Equation~\ref{eq:bootstrap} is estimated using the non-parametric bootstrap. Across $\gamma \in [0,1.2]$, the minimal realized coverage was $93 \%$. The maximal realized coverage was $96 \%$. Averaged across all $\gamma \in [0,1]$, the overall coverage was $94.6 \%$.

\begin{figure}
\begin{center}
  \includegraphics[scale=.5]{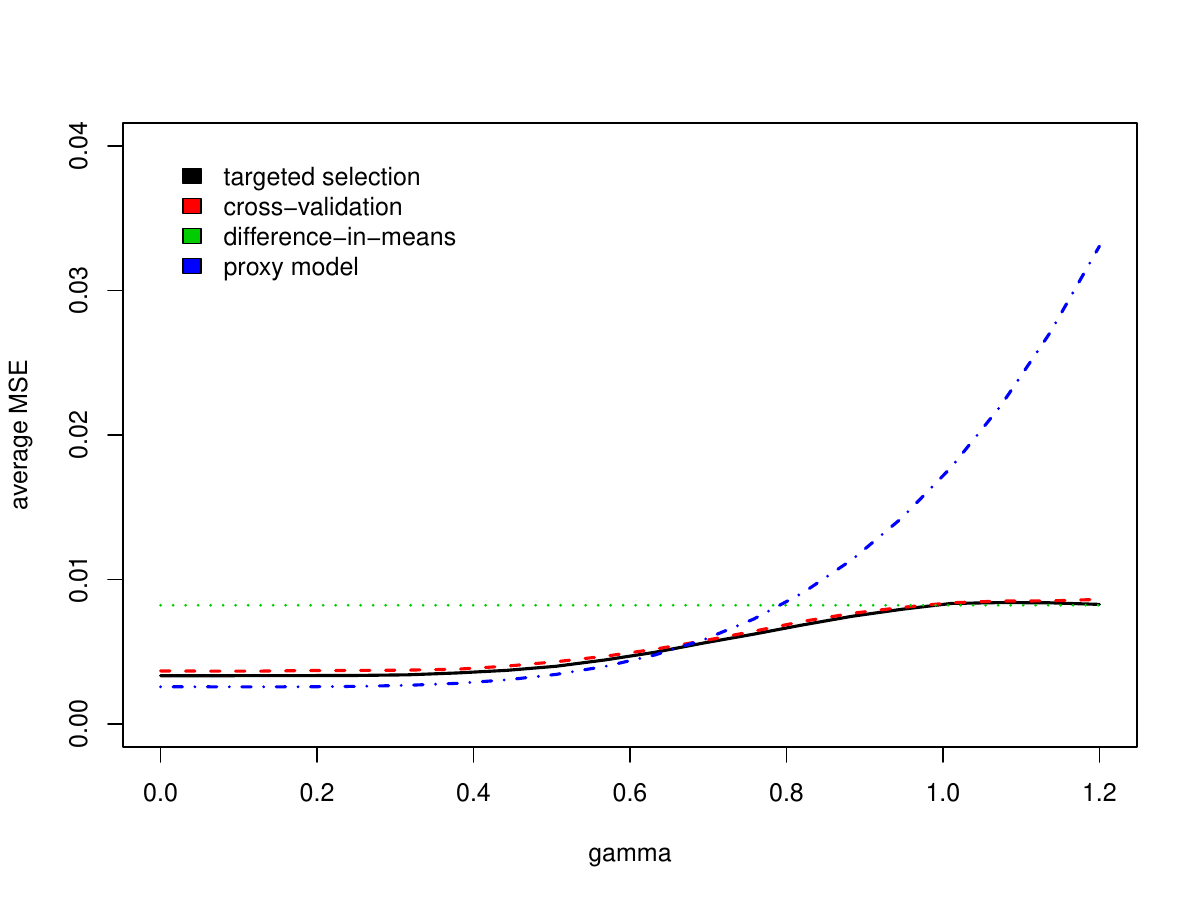}
  \caption{Mean-squared error $R(\hat w)$ where $\hat w$ is selected via the cross-validation criterion or targeted model selection. The data are drawn according to equation~\eqref{eq:20}. Cross-validation and targeted selection is used to stabilize the difference-in-means estimator by shrinking towards an estimator that makes use of a proxy outcome. Both procedures perform better than the baseline model for $\gamma \in [0,1]$. %
  }\label{fig:proxy}
\end{center}
\end{figure}

\section{Conclusion}

We have introduced a method that allows to conduct targeted parameter selection by estimating the bias and variance of candidate estimators. %
The theoretical justification of the method relies on a linear expansion of the estimator. The method is very general and can be used in both parametric and semi-parametric settings. Under regularity conditions, we showed that the proposed criterion has asymptotically equal or lower variance than competing procedures based on sample splitting. In addition, we showed that for $n \rightarrow \infty$, the modified risk criterion selects models with lower or equal risk than the baseline estimator $\hat \theta^{(0)}$. Furthermore, we derived asymptotically valid confidence intervals in low-dimensional settings. %

In simulations, we showed that the method selects reasonable models and outperforms the cross-validation procedure in most scenarios. The proposed method can decrease variance if the competing estimators are approximately unbiased. However, there is no free lunch. In transitional regimes, for fixed $n$, the estimator can perform worse than the baseline estimator $\hat \theta^{(0)}$. This is to be expected from statistical theory, see for example  \citet[page 110]{van2000asymptotic}. However, as the finite-sample bound and our simulations show, the excess risk goes to zero as the dimension of the target parameter grows. %

The theoretical justification of the proposed method relies on a linear approximation of the estimator in a neighborhood of the parameter values $\theta^{(g)}$. Thus, it would be important to understand the performance of the method in scenarios where parameter estimates of some of the estimators are far from the parameter values. %
In Section~\ref{sec:stab-instr-vari-1}, we have seen some preliminary evidence that the proposed methodology may be used to combine knowledge across data sets. The proposed method is not tailored to this special case. %
Thus, we believe that it would be exciting to investigate whether the  model selection can be further improved for  data fusion tasks.

\if1
{
\section*{Acknowledgements}

The author would like to thank Guillaume Basse, Guido Imbens, and Bin Yu for inspiring discussions.
} \fi

\newpage
\bibliographystyle{plainnat}
\bibliography{bibliography}

\newpage
\pagenumbering{arabic} \setcounter{page}{1}
\appendix
\begin{appendix}
\section{SUPPLEMENTARY MATERIAL}
  
This supplementary material contains proofs for the theoretical results in the main paper.

\subsection{Proof of Theorem~\ref{theorem:unbiased}}

\begin{proof}
 First, note that
\begin{equation*}
 n ( \hat R(g) - \mathbb{E}[\| \hat \theta^{(g)} - \theta^{(0)} \|_2^2]) = \sum_{j=1}^d  n(\hat \theta_j^{(g)} - \hat \theta_j^{(0)})^2 -  \hat\tau_j^{(g)} + \hat\sigma_j^{(g)}   -  n\mathbb{E}[ ( \hat \theta_j^{(g)} - \theta_j^{(0)})^2].
\end{equation*}
Thus, it is sufficient to show that for each $j$ with $\theta_j^{(0)} = \theta_j^{(g)}$
\begin{equation*}
  n(\hat \theta_j^{(g)} - \hat \theta_j^{(0)})^2 -  \hat\tau_j^{(g)} + \hat\sigma_j^{(g)}   -  n\mathbb{E}[ ( \hat \theta_j^{(g)} - \theta_j^{(0)})^2],
\end{equation*}
converges in distribution to a centered random variable and for each $j$ with $\theta_j^{(0)} \neq \theta_j^{(g)}$,
\begin{equation*}
  \sqrt{n}(\hat \theta_j^{(g)} - \hat \theta_j^{(0)})^2 - \sqrt{n}\mathbb{E}[ ( \hat \theta_j^{(g)} - \theta_j^{(0)})^2],
\end{equation*}
converges in distribution to a centered random variable. Thus, without loss of generality, in the following we will assume that $d=1$.\\
\textbf{Case 1: $\theta^{(0)} = \theta^{(g)}$} \\
As $\mathbb{E}[ e_{g}(n)^{2}] = o(1/n)$ and as $ \mathbb{E}[\hat \theta^{(g)}] - \theta^{(g)}  = o(1/\sqrt{n})$,
  \begin{equation*}
    \mathbb{E}[ ( \hat \theta^{(g)} -  \theta^{(0)} )^{2}] = \frac{1}{n} \text{Var}(\psi^{(g)}) + o \left( \frac{1}{n}\right).
  \end{equation*}
 By assumption, %
  \begin{equation*}
     \hat \tau^{(g)} -  \tau^{(g)} = o_{P}(1).
  \end{equation*}
  Similarly, %
  \begin{equation*}
    \hat \sigma^{(g)} -  \sigma^{(g)}= o_{P}(1).
  \end{equation*}
  Combining these equations with the definition of $\hat R(g)$,
\begin{align*}
  \hat R(g) &=  (\frac{1}{n} \sum_{i=1}^{n} \psi^{(g)}(D_{i})- \psi^{(0)}(D_{i}))^{2} 
  - \frac{1}{n} \text{Var}(\psi^{(g)} - \psi^{(0)}) + \frac{1}{n} \text{Var}(\psi^{(g)})+  o_{P}(1/n).
\end{align*}
Using the Central Limit Theorem, $  \frac{1}{\sqrt{n}} \sum_{i=1}^{n} \psi^{(g)}(D_{i})- \psi^{(0)}(D_{i}) $ converges to a Gaussian random variable with mean zero and variance $\text{Var}(\psi^{(g)} - \psi^{(0)})$. Thus,
\begin{align*}
  & \qquad n \left( \hat R(g) - \mathbb{E}[(\hat \theta^{(g)} - \theta^{(0)})^{2}]\right)   \\
  &= n \left( \frac{1}{n} \left(\frac{1}{\sqrt{n}} \sum_{i=1}^{n} \psi^{(g)}(D_{i})- \psi^{(0)}(D_{i}) \right)^{2}   - \frac{1}{n} \text{Var}(\psi^{(g)} - \psi^{(0)})  \right) + o_{P}(1) \\
  &=  \left(\frac{1}{\sqrt{n}} \sum_{i=1}^{n} \psi^{(g)}(D_{i})- \psi^{(0)}(D_{i}) \right)^{2}   -  \text{Var}(\psi^{(g)} - \psi^{(0)})  + o_{P}(1)
\end{align*}
converges in distribution to a random variable with mean zero. This concludes the proof of the case $\theta^{(g)} = \theta^{(0)}$.\\
   \textbf{Case 2: $\theta^{(g)} \neq \theta^{(0)}$} \\
Similarly as above, we can show that   
\begin{align*}
  \hat R(g) &= (\theta^{(g)} - \theta^{(0)})^{2} + 2 (\theta^{(g)} - \theta^{(0)}) \frac{1}{n} \sum_{i=1}^{n} \psi^{(g)}(D_{i})- \psi^{(0)}(D_{i}) +  o_{P}(1/\sqrt{n}),
\end{align*}
and
\begin{equation*}
  \mathbb{E}[(\hat \theta^{(g)} - \theta^{(0)})^{2}] = (\theta^{(g)} - \theta^{(0)})^{2} + o(1/\sqrt{n}).
\end{equation*}
Thus,
\begin{equation}\label{eq:39}
 \sqrt{n}( \hat R(g) - \mathbb{E}[(\hat \theta^{(g)} - \theta^{(0)})^{2}])  = 2(\theta^{(g)} - \theta^{(0)})\frac{1}{\sqrt{n}}  \sum_{i=1}^{n} \psi^{(g)}(D_{i}) - \psi^{(0)}(D_{i}) + o_{P}(1)
\end{equation}
Using the CLT, $\frac{1}{\sqrt{n}}  \sum_{i=1}^{n} \psi^{(g)}(D_{i}) - \psi^{(0)}(D_{i})$ converges in distribution to a centered Gaussian random variable with variance $\text{Var}(\psi^{(g)}(D_{1}) - \psi^{(0)}(D_{1}))$. Using this fact in equation~\eqref{eq:39} concludes the proof.

\end{proof}

\subsection{Proof of Theorem~\ref{theorem:biased}}
Let $S_{k} = \{i : D_{i} \in D^{1,k}\}$.
First, note that
\begin{align*}
 &  \tilde R(g)= \sum_{j=1}^d  \xi_j + r_n,
\end{align*}
where
\begin{equation*}
  \xi_j = \frac{1}{K} \sum_{k} \left(  \theta_j^{(g)} - \theta_j^{(0)}  + \frac{1}{|S_{k}|} \sum_{i \in S_{k}} \psi_j^{(g)}(D_{i}) - \frac{1}{n-|S_{k}|} \sum_{i \not \in S_{k}} \psi_j^{(0)}(D_{i})   \right)^{2}.
\end{equation*}
and where $r_n = o_P(1/n)$ if $\theta^{(g)} = \theta^{(0)}$ and $r_n = o_P(1/\sqrt{n})$ if $\theta^{(g)} \neq \theta^{(0)}$.
Furthermore, using Assumption~\ref{assumptions},
\begin{equation*}
  \mathbb{E}[\| \hat \theta^{(g)} - \theta^{(0)} \|_2^2] = \sum_{j=1}^d (\theta_j^{(g)} - \theta_j^{(0)})^2 + \frac{1}{n} \text{Var}(\psi_j^{(g)}) + o(1/n)
\end{equation*}
Thus, it is sufficient to show the following: For each $j$ with $\theta_j^{(0)} = \theta_j^{(g)}$
\begin{equation*}
  n \left(\xi_j  - \frac{K}{n(K-1)} \text{Var}(\psi_j^{(g)})  - \frac{K}{n} \text{Var}(\psi_j^{(0)}) \right)
\end{equation*}
converges in distribution to a centered random variable and for each $j$ with $\theta_j^{(0)} \neq \theta_j^{(g)}$,
\begin{equation*}
 \sqrt{n} \left(  \xi_j   - (\theta_j^{(g)} - \theta_j^{(0)})^2 - \frac{1}{n} \text{Var}(\psi_j^{(g)}) \right)
\end{equation*}
converges in distribution to a centered random variable. In the following, without loss of generality we will assume that $d=1$.\\

\begin{proof}
\textbf{Case 1: $\theta^{(g)} = \theta^{(0)}$.} 
Recall that $|S_{k}| \sim n (K-1)/K$.  Using the CLT, for every $k=1,\ldots,K$,
\begin{equation*}
  \sqrt{n} \left( \frac{1}{|S_{k}|} \sum_{i \in S_{k}} \psi^{(g)}(D_{i}) - \frac{1}{n-|S_{k}|} \sum_{i \not \in S_{k}} \psi^{(0)}(D_{i})  \right)
\end{equation*}
converges to a centered Gaussian random variable with variance
\begin{equation*}
 \frac{1}{1-\alpha}  \text{Var}(\psi^{(g)}(D_{1})) + \frac{1}{\alpha} \text{Var}(\psi^{(0)}(D_{1})),
\end{equation*}
where $\alpha = \frac{1}{K}$.
 Hence,
\begin{equation*}
 n \xi = \frac{n}{K} \sum_{k} \left( \frac{1}{|S_{k}|} \sum_{i \in S_{k}} \psi^{(g)}(D_{i}) - \frac{1}{n-|S_{k}|} \sum_{i \not \in S_{k}} \psi^{(0)}(D_{i})   \right)^{2} 
\end{equation*}
converges to a random variable with asymptotic mean
\begin{align*}
     & \frac{1}{(1-\alpha)} \text{Var}(\psi^{(g)}(D_{1})) + \frac{1}{\alpha} \text{Var}(\psi^{(0)}(D_{1})) \\
     &=  \frac{K}{(K-1)} \text{Var}(\psi^{(g)}(D_{1})) + K \text{Var}(\psi^{(0)}(D_{1}))
   \end{align*}
   This concludes the proof of case 1.\\
   \textbf{Case 2: $\theta^{(g)} \neq \theta^{(0)}$} \\
Similarly as above, we can show that   
\begin{align*}
   \xi &= ( \theta^{(g)} - \theta^{(0)})^{2} + 2 ( \theta^{(g)} - \theta^{(0)}) \frac{1}{n} \sum_{i=1}^{n} \psi^{(g)}(D_{i})- \psi^{(0)}(D_{i}) +  o_{P}(1/\sqrt{n}).
\end{align*}
Thus,
\begin{equation}\label{eq:37}
  \sqrt{n} \left(  \xi   - (\theta^{(g)} - \theta^{(0)})^2 - \frac{1}{n} \text{Var}(\psi^{(g)}) \right) = 2(\theta^{(g)} - \theta^{(0)})\frac{1}{\sqrt{n}}  \sum_{i=1}^{n} \psi^{(g)}(D_{i}) - \psi^{(0)}(D_{i}) + o_{P}(1).
\end{equation}
Using the CLT, $\frac{1}{\sqrt{n}}  \sum_{i=1}^{n} \psi^{(g)}(D_{i}) - \psi^{(0)}(D_{i})$ converges to a centered Gaussian random variable with variance $\text{Var}(\psi^{(g)}(D_{1}) - \psi^{(0)}(D_{1}))$. Using this fact in  equation~\eqref{eq:37} concludes the proof.
\end{proof}

\subsection{Proof of Theorem~\ref{theorem:variance}}
\begin{proof}

First, we show that it is sufficient to show the statement in the one-dimensional case, i.e.\ for $d=1$. Define $\hat \eta^{(g)} = O \hat \theta^{(g)}$ and $\eta^{(g)} = O \theta^{(g)}$, where the orthonormal matrix $O \in \mathbb{R}^{d \times d}$ is chosen such that the asymptotic covariance of $\sqrt{n} (\hat \eta^{(g)} - \hat \eta^{(0)} -  \eta^{(g)}  + \eta^{(0)})$ is diagonal. By Assumption~\ref{assumptions}, $\sqrt{n} (\hat \eta^{(g)} -  \eta^{(g)} )$ converges to a Gaussian random variable with a covariance matrix that is positive definite. Let $\Sigma^{(g)}$ denote the asymptotic covariance of $\sqrt{n}(\hat \theta^{(g)} - \theta^{(g)})$. Define $\tau_j^{(g,\eta)}$ as the asymptotic standard deviation of $\sqrt{n}( \hat \eta_j^{(g)} - \hat \eta_j^{(0)} - \eta_j^{(g)} + \eta_j^{(0)} )  $ and $\sigma_j^{(g,\eta)}$ as the asymptotic standard deviation of $\sqrt{n}(\hat \eta_j^{(g)} - \eta_j^{(g)})$. Similarly let $\Sigma^{(g,\eta)}$ denote the asymptotic covariance of $\sqrt{n}( \hat \eta^{(g)} - \eta^{(g)}) $. Now note that
\begin{align*}
  &\sum_{j=1}^d ( \sigma_j^{(g)})^2 \\
  &= \text{Trace}(\Sigma^{(g)})  \\
  &= \text{Trace}(\Sigma^{(g)} O^\intercal O)  \\
  &= \text{Trace}( O \Sigma^{(g)} O^\intercal)  \\
  &= \text{Trace}(\Sigma^{(g,\eta)})  \\
  &= \sum_{j=1}^d ( \sigma_j^{(g,\eta)})^2.
\end{align*}
Analogously,
\begin{equation*}
  \sum_{j=1}^d ( \tau_j^{(g)})^2= \sum_{j=1}^d ( \tau_j^{(g,\eta)})^2.
\end{equation*}
Thus,
\begin{align*}
  \hat R(g) &= \| \hat \theta^{(g)} - \hat \theta^{(0)} \|_2^2  - \frac{1}{n}\| \hat \tau^{(g)} \|_2^2 +  \frac{1}{n} \| \hat \sigma^{(g)} \|_2^2 \\
  &= \| \hat \theta^{(g)} - \hat \theta^{(0)} \|_2^2  -  \frac{1}{n}\| \tau^{(g)} \|_2^2  +  \frac{1}{n} \| \sigma^{(g)} \|_2^2 + o_P(1/n)\\
  &= \| \hat \eta^{(g)} - \hat \eta^{(0)} \|_2^2  - \frac{1}{n} \| \tau^{(g,\eta)} \|_2^2 + \frac{1}{n}\| \sigma^{(g,\eta)} \|_2^2 + o_P(1/n).
\end{align*}
Now, by construction $\sqrt{n}(\hat \eta^{(g)} - \hat \eta^{(0)} - \eta^{(g)} + \eta^{(0)})$ converges to a Gaussian with diagonal covariance matrix. Thus, the components of $\sqrt{n}(\hat \eta^{(g)} - \hat \eta^{(0)} - \eta^{(g)} + \eta^{(0)})$ are asymptotically independent. Thus, the asymptotic mean and variance of $ \hat R(g)$
is equal to the sums of asymptotic means and asymptotic variances of
\begin{equation*}
  \hat R_j(g) := ( \hat \eta_j^{(g)} - \hat \eta_j^{(0)} )^2  - \frac{1}{n}  (\tau_j^{(g,\eta)} )^2 + \frac{1}{n}(\sigma_j^{(g,\eta)} )^2.
\end{equation*}
Similarly, note that
\begin{align*}
  \tilde R(g) &=   \frac{1}{K} \sum_{k=1}^{K} \| \hat \theta^{(g)} (D^{1,k}) - \hat \theta^{(0)}(D^{0,k}) \|_2^{2} \\
  &= \frac{1}{K} \sum_{k=1}^{K} \| \hat \eta^{(g)} (D^{1,k}) - \hat \eta^{(0)}(D^{0,k}) \|_2^{2}.
\end{align*}
As the components of $\sqrt{n}(\hat \eta^{(g)} - \hat \eta^{(0)} - \eta^{(g)} + \eta^{(0)})$ are asymptotically independent,
the asymptotic mean and variance of $\tilde R(g)$
is equal to the sums of asymptotic means and asymptotic variances of
\begin{equation*}
 \tilde R_j := \frac{1}{K} \sum_{k=1}^{K} ( \hat \eta_j^{(g)} (D^{1,k}) - \hat \eta_j^{(0)}(D^{0,k}) )^{2}. 
\end{equation*}
Thus, in the following without loss of generality we will focus on the case $d=1$ and $\hat \eta^{(g)} = \hat \theta^{(g)}$, $\eta^{(g)} = \theta^{(g)}$. We will now consider two cases. \\
  \textbf{Case 1: $\theta^{(g)} = \theta^{(0)}$} \\
Let $S_{k} = \{i : D_{i} \in D^{1,k}\}$.  Inspecting the proof of Theorem~\ref{theorem:unbiased} we obtain that
\begin{align*}
  & \qquad n \left( \hat R(g) - \mathbb{E}[(\hat \theta^{(g)} - \theta^{(0)})^{2}]\right)   \\
  &=  \left(\frac{1}{\sqrt{n}} \sum_{i=1}^{n} \psi^{(g)}(D_{i})- \psi^{(0)}(D_{i}) \right)^{2}   -  \text{Var}(\psi^{(g)} - \psi^{(0)})  + o_{P}(1).
\end{align*}
Multiplying with $\alpha = 1/K$, we can rewrite this as
\begin{align*}
  & \qquad \frac{n}{K} \left( \hat R(g) - \mathbb{E}[(\hat \theta^{(g)} - \theta^{(0)})^{2}]\right)   \\
  &=  \left(\frac{1}{ \sqrt{K} \sqrt{n}}  \sum_{i=1}^{n} \psi^{(g)}(D_{i})- \psi^{(0)}(D_{i}) \right)^{2}   - \frac{1}{K} \text{Var}(\psi^{(g)} - \psi^{(0)})  + o_{P}(1).
\end{align*}
Writing $Z_{k} = \frac{1}{\sqrt{ \alpha n }} \sum_{i \not \in S_{k}} \psi^{(g)}(D_{i})$ and $X_{k} = \frac{1}{\sqrt{ \alpha n}} \sum_{i \not \in S_{k}} \psi^{(0)}(D_{i})$ and using $ \alpha = \frac{1}{K}$ we obtain
\begin{align*}
  & \qquad \frac{n}{K} \left( \hat R(g) - \mathbb{E}[(\hat \theta^{(g)} - \theta^{(0)})^{2}]\right)   \\
  &=  \left(\frac{1}{K} \sum_{k=1}^{K} Z_{k} - X_{k} \right)^{2}   - \frac{1}{K} \text{Var}(\psi^{(g)} - \psi^{(0)})  + o_{P}(1).
\end{align*}
Similarly, by inspecting the proof of Theorem~\ref{theorem:biased},
\begin{align*}
 & n (\tilde R(g) - \mathbb{E}[(\hat \theta^{(g)} - \theta^{(0)})^{2}] ) = \\
& n  \left( \frac{1}{K} \sum_{k} \left( \frac{1}{|S_{k}|} \sum_{i \in S_{k}} \psi^{(g)}(D_{i}) - \frac{1}{n-|S_{k}|} \sum_{i \not \in S_{k}} \psi^{(0)}(D_{i})   \right)^{2} - \frac{1}{n} \text{Var}(\psi^{(g)})  \right) + o_{P}(1).
\end{align*}
We have $ (n- |S_{k} |) \sim  \alpha n$, and $ |S_{k}| \sim (K-1) \alpha n $. Thus, multiplying with $ \alpha = 1/K$ we can rewrite this as
\begin{align*}
 & \frac{n}{K} (\tilde R(g) - \mathbb{E}[(\hat \theta^{(g)} - \theta^{(0)})^{2}] ) = \\
  & \alpha n  \Bigg( \frac{1}{K } \sum_{k} \left( \frac{1}{K-1} \frac{1}{ \alpha n} \sum_{i \in S_{k}} \psi^{(g)}(D_{i}) - \frac{1}{ \alpha n} \sum_{i \not \in S_{k}} \psi^{(0)}(D_{i})   \right)^{2} \\
    &- \frac{1}{n} \text{Var}(\psi^{(g)})  \Bigg) + o_{P}(1). \\
  &= \frac{1}{K} \sum_{k} \left( \frac{1}{K-1} \frac{1}{\sqrt{ \alpha n} } \sum_{i \in S_{k}} \psi^{(g)}(D_{i}) - \frac{1}{\sqrt{ \alpha n}} \sum_{i \not \in S_{k}} \psi^{(0)}(D_{i})   \right)^{2} \\
   & -  \frac{1}{K} \text{Var}(\psi^{(g)})    + o_{P}(1) \\
  &= \frac{1}{K} \sum_{k} \left( \frac{1}{K-1}  \sum_{j \neq k}   Z_{j} -  X_{k}  \right)^{2} -  \frac{1}{K} \text{Var}(\psi^{(g)})   + o_{P}(1).
\end{align*}
 Thus, to complete the proof, we have to show that the asymptotic variance of the first term is larger than the asymptotic variance of the second term:
\begin{align*}
  &\frac{n}{K} (\tilde R(g) - \mathbb{E}[(\hat \theta^{(g)} - \theta^{(0)})^{2}] ) \\
  &= \frac{1}{K} \sum_{k} \left( \frac{1}{K-1}  \sum_{j \neq k}   Z_{j} -  X_{k}  \right)^{2}  -\frac{1}{K} \text{Var}(\psi^{(g)})   + o_{P}(1) \\
  &\frac{n}{K} \left( \hat R(g) - \mathbb{E}[(\hat \theta^{(g)} - \theta^{(0)})^{2}]\right) \\
  &=  \left(\frac{1}{K} \sum_{k=1}^{K} Z_{k} - X_{k} \right)^{2}   - \frac{1}{K} \text{Var}(\psi^{(g)} - \psi^{(0)})  + o_{P}(1).
\end{align*}
Using the CLT, for $n \rightarrow \infty$ and $K$ fixed, $(X_{1},\ldots,X_{K},Z_{1},\ldots,Z_{K})$ converge to a centered multivariate Gaussian vector with non-degenerate variance.  Hence, without loss of generality in the following we can assume that  the vector $$(X_{1},\ldots,X_{K},Z_{1},\ldots,Z_{K})$$ is multivariate Gaussian. Recall that from Assumption~\ref{assumptions} it follows that  $|\text{Cor}(\psi^{(g)},\psi^{(0)}) | \neq 1$. Thus we also have $|\text{Cor}(Z_{j},X_{j})| \neq 1$. Using Lemma~\ref{lemma:gauss2} completes the proof of case 1.

    \textbf{Case 2: $\theta^{(g)} \neq \theta^{(0)}$}

    Inspecting the proofs of Theorem~\ref{theorem:unbiased} and Theorem~\ref{theorem:biased}, we obtain that in the case $\theta^{(g)} = \theta^{(0)}$, the asymptotic variance of $\sqrt{n} ( \hat R(g) - \mathbb{E}[(\hat \theta^{(g)} - \theta^{(0)})^{2}]$ is
\begin{equation}
4 (\theta^{(g)} - \theta^{(0)})^{2}  \text{Var}(\psi^{(g)}(D_{1}) - \psi^{(0)}(D_{1})),
\end{equation}
which is the same as the asymptotic variance of $\sqrt{n} ( \tilde  R(g) - \mathbb{E}[(\hat \theta^{(g)} - \theta^{(0)})^{2}])$. This concludes the proof.
\end{proof}
\subsection{Proof of Corollary~\ref{cor:opt}} 
\begin{proof}
  By assumption, we have $\hat \theta^{(g)} - \hat \theta^{(0)} = \theta^{(g)} - \theta^{(0)} + O_{P}(1/\sqrt{n})$. If $\theta^{(g)} - \theta^{(0)} \not \equiv (0,\ldots,0)$, then
\begin{equation*}
  \hat R^{\text{mod}}(g) = \|\theta^{(g)} - \theta^{(0)} \|_2^{2} + O_{P}(1/\sqrt{n}),
\end{equation*}
with $\|\theta^{(g)} - \theta^{(0)} \|_2^{2}  > 0$. On the other hand, if $\theta^{(g)} = \theta^{(0)}$,
\begin{equation*}
   \hat R^{\text{mod}}(g) =  O_{P}(1/n).
 \end{equation*}
 Thus,
\begin{equation*}
     \mathbb{P}[ \theta^{(\bar g)} = \theta^{(0)}] \rightarrow 1.
   \end{equation*}
   Now consider any $g$ with $\theta^{(g)} = \theta^{(0)}$ and $\sum_j \text{Var}(\psi_j^{(g)}) > \sum_j \text{Var}(\psi_j^{(0)})$. Then, using Assumption~\ref{assumptions},
\begin{equation*}
   \hat R^{\text{mod}}(g)  \ge \sum_j \frac{1}{n} \text{Var}(\psi_j^{(g)}) + o_{P}(1/n),
\end{equation*}   
and
\begin{equation*}
  \hat R^{\text{mod}}(0) = \sum_j \frac{1}{n} \text{Var}(\psi_j^{(0)}) ) + o_{P}(1/n).
\end{equation*}
 Recall that by assumption $\sum_j \text{Var}(\psi_j^{(g)}) > \sum_j\text{Var}(\psi_j^{(0)})$.
Thus, $ \mathbb{P}[  \hat R^{\text{mod}}(0) <  \hat R^{\text{mod}}(g) ] \rightarrow 1$ for $n \rightarrow \infty$. As this holds for all $g$ with $\sum_j \text{Var}(\psi_j^{(g)}) > \sum_j \text{Var}(\psi_j^{(0)})$ for $n \rightarrow \infty$, this concludes the proof.
\end{proof} 
\subsection{Proof of Theorem~\ref{thm:ci}}
\begin{proof}
  We will first prove the first statement. Note that if $\hat \Sigma$ is positive definite, $\beta \mapsto b_{D_1,\ldots,D_n}(\beta)$ is continuous and strictly increasing  and $\lim_{\beta \rightarrow \infty} b_{D_1,\ldots,D_n}(\beta) = 1$, $\lim_{\beta \rightarrow - \infty} b_{D_1,\ldots,D_n}(\beta) = 0$. As $\hat \Sigma$ converges in probability to a positive definite matrix, the inverse $b^{-1}_{D_1,\ldots,D_n} : (0,1) \rightarrow \mathbb{R}$ is well-defined, except on an event with vanishing probability as $n \rightarrow \infty$.

  Let us now turn to the second statement.
  We use the decomposition 
\begin{align}\label{eq:add-decomp}
  \mathbb{P}[\sqrt{n}(\hat \theta_j^{(\bar g)} - \theta_j^{(0)}) \le \beta] = \sum_g \mathbb{P}[ \sqrt{n}(\hat \theta_j^{(g)}  - \theta_j^{(0)} )\le \beta; \bar g =g].
\end{align}
Define the event
\begin{align*}
  &A_g = \{ \bar g = g \}  \\
  &= \{\max(n \| \hat \theta^{(g)} - \hat \theta^{(0)} \|_2^2 -   \|\hat \tau^{(g)} \|_2^2,0) +  \|\hat \sigma^{(g)}\|_2^2 < \min_{g' \neq g} \max(n \|\hat \theta^{(g')} - \hat \theta^{(0)} \|_2^2 -  \| \hat \tau^{(g')} \|_2^2,0) + \|\hat \sigma^{(g')} \|_2^2 \}.
\end{align*}
 It is crucial to understand how this event behaves in the limit. $\sqrt{n}(\hat \theta - \theta)$ converges to a centered Gaussian distribution with covariance matrix $\Sigma$. Thus, define
\begin{equation*}
  A_g^{lim} = \max( \| Z_g^0 - Z_0^0 \|_2^2 -  \| \tau^{(g)} \|_2^2,0) + \|\sigma^{(g)} \|_2^2 < \min_{g' \neq g; \theta^{(g')} = \theta^{(0)}} \max( \| Z_{g'}^0 - Z_0^0 \|_2^2 -  \| \tau^{(g')} \|_2^2,0) +  \| \sigma^{(g)'} \|_2^2,
\end{equation*}
where $(Z_0^0,\ldots,Z_G^0) \sim \mathcal{N}(0,\Sigma)$, independent of the data $\{ D_j \}_{j \in \mathbb{N}}$. Before we investigate the large-sample behaviour of equation~\eqref{eq:add-decomp}, let us fix some notation.
\begin{align*}
   f_g(\beta) &:= \mathbb{P}[ \{\sqrt{n}(\hat \theta_j^{(g)} - \theta_j^{(0)}) \le \beta \} \cap A_g ] \\
  f_g^{lim}(\beta) &:= \begin{cases}\mathbb{P}[ \{ (Z_g^0)_j  \le \beta \} \cap  A_g^{lim}] & \text{ if } \theta^{(g)} = \theta^{(0)}, \\
    0 & \text{ else}.
  \end{cases} \\
  f_g^{comp}(\beta,D_1,\ldots,D_n) &:=    \mathbb{P}[  Z_{g,j} -  \sqrt{n} \hat \theta_j^{(0)}\le \beta; \max( \| Z_g - Z_0 \|_2^2 -  \| \hat \tau^{(g)} \|_2^2,0) +  \| \hat \sigma^{(g)} \|_2^2 \\
  &  \qquad  < \min_{g' \neq g} \max( \| Z_g - Z_0 \|_2^2 -  \| \hat \tau^{(g)} \|_2^2,0) +  \| \hat \sigma^{(g)} \|_2^2| D_1,\ldots,D_n],
\end{align*}
where $(Z_0,\ldots,Z_G) \sim \mathcal{N}(\sqrt{n}\hat \theta(D), \hat \Sigma(D))$, conditionally on the data $D$.  In the first step, let us consider $g$ for which $\theta^{(g)} = \theta^{(0)}$. Recall that $\hat \sigma^{(g)} \rightarrow \sigma^{(g)}$, $\hat \tau^{(g)} \rightarrow \tau^{(g)}$, and $\hat \Sigma \rightarrow \Sigma$. Also, $\sqrt{n}(\hat \theta - \theta)$ converges weakly to a distribution that is equal to the distribution of $Z^0$. %
  By weak convergence, for all $\beta$,
\begin{equation}\label{eq:lim-unbiased}
  \lim_n f_g(\beta) = f_g^{lim}(\beta) = \lim_n f_g^{comp}(\beta,D_1,\ldots,D_n),
\end{equation}
where the limit on the right-hand side is in probability. Now, let us focus on $g$ for which $\theta^{(g)} \neq \theta^{(0)}$. Note that for all $g$ with $\theta^{(g)} \neq \theta^{(0)}$, $n  \| \hat \theta^{(g)} - \hat \theta^{(0)} \|_2^2 \rightarrow \infty$. Thus, for all $g $ with $\theta^{(g)} \neq \theta^{(0)}$,
\begin{align*}
  &\mathbb{P}[\sqrt{n}(\hat \theta_j^{(g)}  - \theta_j^{(0)})\le \beta; \bar g =g] \\
  &  \le \mathbb{P}[   \max(n \| \hat \theta^{(g)} - \hat \theta^{(0)} \|_2^2 -  \| \hat \tau^{(g)} \|_2^2,0) +   \| \hat \sigma^{(g)} \|_2^2 < \min_{g' \neq g} \max(n \| \hat \theta^{(g)} - \hat \theta^{(0)} \|_2^2 -  \| \hat \tau^{g} \|_2^2,0) +  \| \hat \sigma^{(g)} \|_2^2] \\
    & \le \mathbb{P}[   \max(n \|\hat \theta^{(g)} - \hat \theta^{(0)} \|_2^2 -  \| \hat \tau^{(g)} \|_2^2,0) +   \| \hat \sigma^{(g)} \|_2^2 <    \|\hat \sigma^{(0)} \|_2^2].
  \end{align*}
Since $\hat \sigma^{(0)} \rightarrow \sigma^{(0)}$ and ${\hat \sigma^{(g)}} \rightarrow \sigma^{(g)}$ and $\hat \tau^{(g)} \rightarrow \tau^{(g)}$ and $n  \| \hat \theta^{(g)} - \hat \theta^{(0)} \|_2^2 \rightarrow \infty$,
for all $g $ with $\theta^{(g)} \neq \theta^{(0)}$ we have $\mathbb{P}[\sqrt{n}(\hat \theta_j^{(g)}  - \theta_j^{(0)})\le \beta; \bar g =g] \rightarrow 0$. Thus for all $g$ with $\theta^{(g)} \neq \theta^{(0)}$, $f_g(\beta) \rightarrow 0$. Analogously it can be shown that $ f_g^{comp}(\beta,D) \rightarrow 0$ for all $g$ with $\theta^{(g)} \neq \theta^{(0)}$. Thus, for $g$ with $\theta^{(g)} \neq \theta^{(0)}$,
\begin{equation}\label{eq:lim}
  \lim_n f_g(\beta) = f_g^{lim}(\beta) = \lim_n f_g^{comp}(\beta),
\end{equation}
where the limit on the right-hand side is in probability. By equation~\eqref{eq:lim-unbiased} and equation~\eqref{eq:lim}, for all $g$,
\begin{equation*}
  \lim_n f_g(\beta) =  f_g^{lim}(\beta) = \lim_n f_g^{comp}(\beta,D_1,\ldots,D_n),
\end{equation*}
Now note that as $\Sigma$ is positive definite, $(Z_0^0,\ldots,Z_G^0)$ has positive density on $\mathbb{R}^{d G}$ and thus
\begin{equation*}
  \beta \mapsto  f_0^{lim}(\beta)
\end{equation*}
is strictly increasing. By definition, for all $g=1,\ldots,G$, $\beta \mapsto f_g^{lim}(\beta)$ is non-decreasing. Thus, the function 
\begin{equation*}
 \beta \mapsto \sum_g f_g^{lim}(\beta)
\end{equation*}
is strictly increasing. Note that by construction $\{ A_g^{lim } : \theta^{(g)} = \theta^{(0)} \}$ form a disjoint partition of the sample space. Thus, using the definition of $f_g^{lim}$,
\begin{equation*}
\lim_{\beta \rightarrow \infty} \sum_g f_g^{lim}(\beta) = 1 \text{ and }
\lim_{\beta \rightarrow - \infty} \sum_g f_g^{lim}(\beta) = 0.
\end{equation*}
Similarly, by definition, $\beta \mapsto \sum_g f_g(\beta)$ and $\beta \mapsto \sum_g f_g^{comp}(\beta)$ are increasing with $ 0 \le  \sum_g f_g(\beta) \le 1$ and $0 \le \sum_g f_g^{comp}(\beta) \le 1$. Invoking Polya's theorem with equation~\eqref{eq:lim}, $\sup_\beta |f_g(\beta) - f_g^{lim}(\beta)| \rightarrow 0$ in probability. Analogously, $ \sup_{\beta} |f_g^{comp}(\beta,D_1,\ldots,D_n) - f_g^{lim}(\beta)|$ converges to zero in probability.

Consider a sequence $n \mapsto c_n := b^{-1}_{D_1,\ldots,D_n}(1-\alpha/2)$ such that $$\sum_g f_g^{comp}( c_n ,D_1,\ldots,D_n) = 1-\alpha/2.$$ Since $ f_g^{comp}(\beta)$ converges to $f_g^{lim}(\beta)$ uniformly, we have that $\sum_g f_g^{lim}(c_n) \rightarrow 1-\alpha/2$ in probability. Since $\beta \mapsto \sum_g f_g^{lim}(\beta)$ is strictly increasing and continuous, $c_n$ converges in probability to the unique $c^0 \in \mathbb{R}$ with $\sum_g f_g^{lim}(c^0) = 1-\alpha/2$.
Thus, for all $\epsilon > 0$,
\begin{align*}
  &\mathbb{P}[ \sqrt{n} (\hat \theta_j^{(\bar g)} - \theta_{j}^{(0)} ) \le b^{-1}_{D_1,\ldots,D_n}(1-\alpha/2)] \\
  &= \mathbb{P}[ \sqrt{n} (\hat \theta_j^{(\bar g)} - \theta_{j}^{(0)} ) \le c_n ] \\
  &\le \mathbb{P}[ \sqrt{n} (\hat \theta_j^{(\bar g)} - \theta_{j}^{(0)} ) \le c^0 + \epsilon ] + \mathbb{P}[ |c_n - c^0| \ge \epsilon ] \\
  &= \sum_g f_g(c^0+ \epsilon) + o(1).
\end{align*}
Now, letting $\epsilon \rightarrow 0$ and using that $\beta \mapsto f_g^{lim}(\beta)$ is continuous and $\sup_\beta | f_g(\beta) - f_g^{lim}(\beta) | \rightarrow 0$,
\begin{equation}\label{eq:limsup}
  \lim \sup_{n \rightarrow \infty} \mathbb{P}[ \sqrt{n} (\hat \theta_j^{(\bar g)} - \theta_{j}^{(0)} ) \le b^{-1}_{D_1,\ldots,D_n}(1-\alpha/2)] \le  \sum_g f_g^{lim}(c^0) = 1- \alpha/2.
\end{equation}
Analogously,
\begin{align*}
  &\mathbb{P}[ \sqrt{n} (\hat \theta_j^{(\bar g)} - \theta_{j}^{(0)} ) \le b^{-1}_{D_1,\ldots,D_n}(1-\alpha/2)] \\
  &= \mathbb{P}[ \sqrt{n} (\hat \theta_j^{(\bar g)} - \theta_{j}^{(0)} ) \le c_n ] \\
  &\ge \mathbb{P}[ \sqrt{n} (\hat \theta_j^{(\bar g)} - \theta_{j}^{(0)} ) \le c^0 - \epsilon ] + \mathbb{P}[ |c_n - c^0| \ge \epsilon ] \\
  &= \sum_g f_g(c^0- \epsilon) + o(1).
\end{align*}
Thus,
\begin{align}\label{eq:liminf}
  \lim \inf_{n \rightarrow \infty} \mathbb{P}[ \sqrt{n} (\hat \theta_j^{(\bar g)} - \theta_{j}^{(0)} ) \le b^{-1}_{D_1,\ldots,D_n}(1-\alpha/2)] \ge  \sum_g f_g^{lim}(c^0)  = 1-\alpha/2.
\end{align}
Combining equation~\eqref{eq:limsup} and equation~\eqref{eq:liminf},
\begin{equation*}
  \lim_{n \rightarrow \infty} \mathbb{P}[ \sqrt{n} (\hat \theta_j^{(\bar g)} - \theta_{j}^{(0)} ) \le b^{-1}_{D_1,\ldots,D_n}(1-\alpha/2)]  = 1-\alpha/2.
\end{equation*}
Analogously, it can be shown that
\begin{equation*}
  \lim_{n \rightarrow \infty} \mathbb{P}[ \sqrt{n} (\hat \theta_j^{(\bar g)} - \theta_{j}^{(0)} ) \ge b^{-1}_{D_1,\ldots,D_n}(\alpha/2)]  = 1-\alpha/2.
\end{equation*}
Thus,
\begin{equation*}
  \lim_{n \rightarrow \infty} \mathbb{P}[ b^{-1}_{D_1,\ldots,D_n}(\alpha/2) \le \sqrt{n} (\hat \theta_j^{(\bar g)} - \theta_{j}^{(0)} ) \le b^{-1}_{D_1,\ldots,D_n}(1-\alpha/2)]  = 1-\alpha.
\end{equation*}
This completes the proof.

\end{proof}

\subsection{Proof of Theorem~\ref{thm:uniform}}

\begin{proof}
  First, with probability exceeding $1- \exp{-t}$,
  \begin{align*}
     & \, \,  \left| \frac{n}{d} \| \hat \theta^{(g)} - \hat \theta^{(0)} \|_2^2 -  \frac{n}{d}\| \delta^{(g)} \|_2^2 - \frac{1}{d} \| \tau^{(g)} \|_2^2 \right|  \\
    &=   \left| \frac{n}{d} \| \hat \theta^{(g)} - \hat \theta^{(0)} - \delta^{(g)} \|_2^2  - \frac{1}{d} \| \tau^{(g)} \|_2^2 + \frac{2n}{d} (\hat \theta^{(g)} - \hat \theta^{(0)} -\delta^{(g)}) \cdot   \delta^{(g)}    \right| \\
    & =  \left| \frac{n}{d} \| \epsilon^{(g)} - \epsilon^{(0)} \|_2^2  - \frac{1}{d} \| \tau^{(g)} \|_2^2 + \frac{2n}{d}    (\epsilon^{(g)} - \epsilon^{(0)}) \cdot   \delta^{(g)}    \right| \\
    &\le   C_1(c_0,\eta)  \frac{t}{d} +  C_2(c_0,\eta)\sqrt{\frac{t}{d}},
  \end{align*}
  for some constants $C_1(c_0,\eta)$ and $C_2(c_0,\eta)$. Here, we used sub-Gaussian and subexponential tail bounds, see for example Chapter~2 in \citet{wainwright2019high}. More precisely, we used that $ \frac{n}{d} (\epsilon^{(g)} - \epsilon^{(0)})\cdot \delta^{(g)}$ is sub-Gaussian with variance proxy %
  $\frac{4}{d} c_0^2 \max_{g,j} (\eta_j^{(g)})^2$ and that $ n\| \epsilon^{(g)} - \epsilon^{(0)} \|_2^2 - (\tau^{(g)})^2$ is subexponential with parameter $ 64 d \max_{g,j} (\eta_j^{(g)})^2$. %
  Using a union bound, for every $\kappa > 0$ there exists a constant $C_3(c_0,\eta,\kappa)$, such that with probability exceeding $1-\kappa$,
  \begin{equation*}
    \sup_g \left| \frac{n}{d} \| \hat \theta^{(g)} - \hat \theta^{(0)} \|_2^2 - \frac{n}{d} \| \delta^{(g)} \|_2^2 -  \frac{1}{d} \| \tau^{(g)} \|_2^2 \right| \le  C_3 \left(\sqrt{  \frac{\log G}{d}} + \frac{\log G}{d} \right)
  \end{equation*}
  Since $\log(G) /d \le c_1$, for all $\kappa > 0$ there exists a constant $C_4(c_0,c_1,\eta,\kappa)$ such that with probability exceeding $1-\kappa/2$,
  \begin{equation}\label{eq:firsthalf}
    \sup_g \left| \frac{n}{d} \| \hat \theta^{(g)} - \hat \theta^{(0)} \|_2^2 - \frac{n}{d}\| \delta^{(g)} \|_2^2 - \frac{1}{d}  \| \tau^{(g)} \|_2^2 \right| \le  C_4 \sqrt{  \frac{\log G}{d}} 
  \end{equation}
  Analogously, it can be shown that there exists a constant $C_5$ such that with probability exceeding $1-\kappa/2$,
  \begin{align}\label{eq:secondhalf}
    \begin{split}
    & \, \, \sup_g \left| \frac{n}{d} \| \hat \theta^{(g)} - \theta^{(0)} \|_2^2 - \frac{n}{d}\| \delta^{(g)} \|_2^2 -   \frac{1}{d}\| \sigma^{(g)} \|_2^2 \right|  \\
    &=  \sup_g \left| \frac{n}{d} \| \epsilon^{(g)} \|_2^2  - \frac{1}{d}\| \sigma^{(g)} \|_2^2 + \frac{2n}{d} \epsilon^{(g)} \cdot   \delta^{(g)}    \right| \\
    &\le C_5 \sqrt{ \frac{\log G}{d}}.
    \end{split}
  \end{align}
  Combining equation~\eqref{eq:firsthalf} and equation~\eqref{eq:secondhalf}, with probability $1-\kappa$,
  \begin{align*}
    & \sup_g \left| \frac{n}{d}  \| \hat \theta^{(g)} - \theta^{(0)} \|_2^2 - \frac{n}{d} \hat R^\text{mod}(g) \right| \\
    &\sup_g \left| \frac{n}{d}  \| \hat \theta^{(g)} - \theta^{(0)} \|_2^2 -  \max \left(\frac{n}{d} \| \hat \theta^{(g)} - \hat \theta^{(0)} \|_2^2 - \frac{1}{d} \sum_{j=1}^d  (\hat \tau_j^{(g)})^2 ,0 \right) - \frac{1}{d} \sum_{j=1}^d (\hat \sigma_{j}^{(g)})^2 \right| \\
    &\le   2\iota_{n,d} +  \sup_g \left| \frac{n}{d}  \| \hat \theta^{(g)} - \theta^{(0)} \|_2^2 -  \max \left(\frac{n}{d} \| \hat \theta^{(g)} - \hat \theta^{(0)} \|_2^2 - \frac{1}{d} \sum_{j=1}^d (\tau_j^{(g)})^2 ,0 \right) - \frac{1}{d} \sum_{j=1}^d (\sigma_j^{(g)})^2 \right| \\
    &\le 2\iota_{n,d} + C_6 \sqrt{\frac{\log G}{d}} + \sup_g \bigg| \frac{1}{d} \sum_{j=1}^d (\sigma_{j}^{(g)})^2 + \frac{n}{d} \|\theta^{(g)} - \theta^{(0)} \|_2^2   \\
    &- \max \left(\frac{n}{d} \| \theta^{(g)} - \theta^{(0)} \|_2^2 +  \frac{1}{d} \sum_{j=1}^d (\tau_j^{(g)})^2 - (\tau_j^{(g)})^2 ,0 \right) - \frac{1}{d} \sum_{j=1}^d (\sigma_{j}^{(g)})^2 \bigg| \\
    &\le 2\iota_{n,d} + C_6 \sqrt{\frac{\log G}{d}},
  \end{align*}
  for some constant $C_6$ that depends on $c_0$, $c_1$, $\eta$, and $\kappa$. Thus, on that event,
  \begin{equation*}
    \left| \min_g \frac{n}{d} \| \hat \theta^{(g)} - \theta^{(0)} \|_2^2 - \frac{n}{d} \min_g \hat R^\text{mod}(g) \right| \le 2 C_6 \sqrt{ \frac{\log G}{d}} + 4\iota_{n,d}.
  \end{equation*}
  Furthermore, on that event,
  \begin{equation*}
    \left| \frac{n}{d} \hat R^\text{mod}( \bar g) - \frac{n}{d} \| \hat \theta^{(\bar g)} - \theta^{(0)} \|_2^2 \right| \le C_6 \sqrt{ \frac{\log G}{d}} + 2\iota_{n,d}.
  \end{equation*}
  Thus,
  \begin{equation*}
    \frac{n}{d} \| \hat \theta^{(\bar g)} - \theta^{(0)} \|_2^2  \le  \min_g \frac{n}{d} \| \hat \theta^{(g)} - \theta^{(0)} \|_2^2 + 3 C_6 \sqrt{\frac{\log G}{d}} + 6 \iota_{n,d}.
  \end{equation*}
  Here, we used that $\min_g \hat R^\text{mod}(g) = \hat R^\text{mod}(\bar g)$.
   \end{proof}

\subsection{Auxiliary Lemmas}

\begin{lemma}\label{lemma:gauss2}
 Let $(Z_{i},X_{i})$ be i.i.d.\ Gaussian with mean zero and nonzero variance and $|\text{Cor}(Z_{i},X_{i})| \neq 1$. Let $K \ge 2$. %
 Then, 
\begin{equation}\label{eq:25}
  \text{Var} \left( \left( \frac{1}{K}\sum_{i=1}^{K} Z_{i} - X_{i} \right)^{2} \right) < \text{Var} \left( \frac{1}{K} \sum_{i=1}^{K} \left( \frac{1}{K-1} \sum_{j \neq i} Z_{j}  - X_{i} \right)^{2} \right).
\end{equation}
\end{lemma}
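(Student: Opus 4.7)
The plan is to rewrite the quantity inside the right-hand variance in~\eqref{eq:25} as the quantity inside the left-hand variance plus an \emph{independent} residual term, so that taking variances gives $\text{Var}(\mathrm{RHS}) = \text{Var}(\mathrm{LHS}) + \text{extra}$, with the strict inequality then following from non-degeneracy of the extra term.

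First I would set $\bar W := \bar Z - \bar X = \frac{1}{K}\sum_i (Z_i - X_i)$, so that the left-hand side equals $\text{Var}(\bar W^{2})$. For the right-hand side, writing $U_i := \frac{1}{K-1}\sum_{j \neq i} Z_j - X_i = \frac{K\bar Z - Z_i}{K-1} - X_i$, a direct computation gives $\frac{1}{K}\sum_i U_i = \bar W$. Expanding $U_i^2 = \bar W^{2} + 2\bar W (U_i - \bar W) + (U_i - \bar W)^2$ and averaging, the cross term vanishes because $\sum_i (U_i - \bar W) = 0$, yielding
\begin{equation*}
\frac{1}{K}\sum_{i=1}^K U_i^2 \;=\; \bar W^{2} + \frac{1}{K}\sum_{i=1}^K (U_i - \bar W)^{2},
\end{equation*}
and a short manipulation simplifies $U_i - \bar W$ to $-\tilde Z_i/(K-1) - \tilde X_i$, where $\tilde Z_i := Z_i - \bar Z$ and $\tilde X_i := X_i - \bar X$.

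Next I would invoke the classical Gaussian fact that in an i.i.d.\ sample from a bivariate Gaussian law, the sample mean vector $(\bar Z, \bar X)$ is independent of the collection of centered coordinates $(\tilde Z_i, \tilde X_i)_{i=1}^K$; this reduces to a one-line covariance computation combined with joint normality. Consequently $\bar W^{2}$ is independent of $A := \frac{1}{K}\sum_i \bigl( \tilde Z_i/(K-1) + \tilde X_i \bigr)^2$, so
\begin{equation*}
\text{Var}\!\left( \tfrac{1}{K}\sum_i U_i^{2} \right) \;=\; \text{Var}(\bar W^{2}) + \text{Var}(A),
\end{equation*}
reducing the lemma to the statement $\text{Var}(A) > 0$.

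Finally, $A$ is a scaled squared norm of the centered Gaussian vector with coordinates $Y_i := \tilde Z_i/(K-1) + \tilde X_i$, and for a centered Gaussian vector the squared norm has strictly positive variance unless the vector is almost surely zero. A direct computation gives $\text{Var}(Y_i)$ proportional to $\text{Var}(Z_1 + (K-1) X_1) = \sigma_Z^{2} + 2(K-1)\sigma_{ZX} + (K-1)^2 \sigma_X^{2}$, which vanishes only when $Z_1$ and $X_1$ are perfectly (negatively) correlated---a case ruled out by the hypothesis $|\text{Cor}(Z_i, X_i)| \neq 1$. The main obstacle I anticipate is the independence step: one must verify the covariance relations not only within each coordinate (the usual one-dimensional sample-mean / residual independence) but also across the two coordinates $Z$ and $X$, which is exactly where the joint normality of $(Z_i, X_i)$ becomes essential; once that is in place, the rest of the argument is routine.
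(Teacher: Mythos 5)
Your proof is correct, and it takes a genuinely different route from the paper's. The paper regresses $X_{i}$ on $Z_{i}$, writing $X_{i} = \alpha Z_{i} + \epsilon_{i}$ with $(\epsilon_{i})_{i}$ independent of $(Z_{i})_{i}$, expands both variances into three pieces, and verifies three separate inequalities, two of which are delegated to auxiliary lemmas (Lemma~\ref{lemma:baba} and Lemma~\ref{lemma:gaga}) whose proofs require fairly tedious polynomial identities in $\alpha$ and $K$, culminating in recognizing perfect powers such as $\left(\alpha + \tfrac{1}{K-1}\right)^{4}$. Your argument instead uses the exact algebraic identity $\frac{1}{K}\sum_{i} U_{i}^{2} = \bar W^{2} + \frac{1}{K}\sum_{i}(U_{i}-\bar W)^{2}$ (valid because $\frac{1}{K}\sum_{i}U_{i}=\bar W$) together with the classical independence of the Gaussian sample mean $(\bar Z, \bar X)$ from the residuals $(\tilde Z_{i}, \tilde X_{i})_{i}$, which turns the claimed strict inequality into the single statement $\text{Var}(A) > 0$ for an explicit nonnegative quadratic form $A$ independent of $\bar W^{2}$. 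This is shorter, avoids the polynomial bookkeeping entirely, and makes transparent where the hypothesis $|\text{Cor}(Z_{i},X_{i})| \neq 1$ enters: it is exactly what prevents $Z_{1} + (K-1)X_{1}$ from degenerating, so that the Gaussian vector $(Y_{1},\ldots,Y_{K})$ is not almost surely zero and its squared norm has positive variance. The one point you should state explicitly (and you do flag it) is that the pair $(Z_{i},X_{i})$ must be \emph{jointly} Gaussian for the mean--residual independence to hold across the two coordinates; the paper's proof relies on the same joint normality when it writes $X_{i} = \alpha Z_{i} + \epsilon_{i}$, so your argument uses no stronger hypotheses than the original.
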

\begin{proof}
    As $(Z_{i},X_{i})$ are multivariate Gaussian and $|\text{Cor}(Z_{i},X_{i})| \neq 1$, $X_{i} = \alpha Z_{i} + \epsilon_{i}$, for some $\alpha \in \mathbb{R}$, where $(\epsilon_{i})_{i}$ is centered Gaussian and independent of $(Z_{i})_{i}$ with nonzero variance. Thus, it suffices to show that
\begin{equation}\label{eq:27}
  \text{Var} \left( \left( \frac{1}{K}\sum_{i=1}^{K} Z_{i} - \alpha Z_{i} - \epsilon_{i} \right)^{2} \right) < \text{Var} \left( \frac{1}{K} \sum_{i=1}^{K} \left( \frac{1}{K-1} \sum_{j \neq i} Z_{j}  - \alpha Z_{i} - \epsilon_{i} \right)^{2} \right).
\end{equation}
On the left-hand side of equation~\eqref{eq:27}, we have
\begin{align*}
  \begin{split}
    &   \text{Var} \left(  \left( \frac{1}{K}\sum_{i=1}^{K} Z_{i} - \alpha Z_{i} \right)^{2} -  2 \frac{1}{K}\sum_{i=1}^{K}  (Z_{i} - \alpha Z_{i}) \frac{1}{K} \sum_{i=1}^{K}  \epsilon_{i}  +  \left( \frac{1}{K}\sum_{i=1}^{K} \epsilon_{i} \right)^{2} \right) \\
    &= \text{Var} \left(  \left( \frac{1}{K}\sum_{i=1}^{K} Z_{i} - \alpha Z_{i} \right)^{2} \right) +  4  \text{Var} \left(\frac{1}{K}\sum_{i=1}^{K}  (Z_{i} - \alpha Z_{i}) \right) \text{Var} \left( \frac{1}{K} \sum_{i=1}^{K}  \epsilon_{i}  \right) \\
    &+  \text{Var} \left( \left( \frac{1}{K}\sum_{i=1}^{K} \epsilon_{i} \right)^{2} \right)
  \end{split}
\end{align*}
On the right-hand side of equation~\eqref{eq:27} we have \begin{align*}
  \begin{split}
    &\text{Var} \left( \frac{1}{K} \sum_{i=1}^{K} \left( \frac{1}{K-1} \sum_{j \neq i} Z_{j}  - \alpha Z_{i} \right)^{2}    -  \frac{2}{K} \sum_{i=1}^{K}  \epsilon_{i} \frac{1}{(K-1)} \sum_{j \neq i} Z_{j} - \alpha Z_{i}  +  \frac{1}{K} \sum_{i=1}^{K} \left(  \epsilon_{i} \right)^{2}\right) \\
    &= \text{Var} \left( \frac{1}{K} \sum_{i=1}^{K} \left( \frac{1}{K-1} \sum_{j \neq i} Z_{j}  - \alpha Z_{i} \right)^{2}  \right)  + 4 \text{Var} \left( \frac{1}{K} \sum_{i=1}^{K}  \epsilon_{i} \frac{1}{(K-1)} \sum_{j \neq i} Z_{j} - \alpha Z_{i} \right) \\
    &\qquad+  \text{Var} \left( \frac{1}{K}\sum_{i=1}^{K} \left( \epsilon_{i} \right)^{2}\right).
  \end{split}
\end{align*}
Thus, it suffices to show that the following three inequalities hold:
\begin{align*}
  \begin{split}
    \text{Var} \left( \left( \frac{1}{K} \sum_{i=1}^{K} Z_{i}  - \alpha Z_{i} \right)^{2} \right)   &\le  \text{Var} \left( \frac{1}{K} \sum_{i=1}^{K} \left( \frac{1}{K-1} \sum_{j \neq i} Z_{j}  - \alpha Z_{i} \right)^{2} \right) \\
    \text{Var} \left( \left( \frac{1}{K} \sum_{i=1}^{K} \epsilon_{i} \right)^{2}  \right)&< \text{Var}  \left( \frac{1}{K} \sum_{i=1}^{K} \left(  \epsilon_{i}  \right)^{2} \right) \\
  4  \text{Var} \left(\frac{1}{K}\sum_{i=1}^{K}  (Z_{i} - \alpha Z_{i}) \right) \text{Var} \left( \frac{1}{K} \sum_{i=1}^{K}  \epsilon_{i}  \right)   &\le   4 \text{Var} \left( \frac{1}{K} \sum_{i=1}^{K}  \epsilon_{i} \frac{1}{(K-1)} \sum_{j \neq i} Z_{j} - \alpha Z_{i} \right)
  \end{split}
\end{align*}
The first two inequalities follow from Lemma~\ref{lemma:baba} and Lemma~\ref{lemma:gaga}. Let us now deal with the last term.
\begin{align*}
  &\text{Var} \left( \frac{1}{K} \sum_{i=1}^{K}  \epsilon_{i} \frac{1}{(K-1)} \sum_{j \neq i} Z_{j} - \alpha Z_{i} \right) -    \text{Var} \left(\frac{1}{K}\sum_{i=1}^{K}  (Z_{i} - \alpha Z_{i}) \right) \text{Var} \left( \frac{1}{K} \sum_{i=1}^{K}  \epsilon_{i}  \right)  \\
  &= \sum_{i \neq j}\text{Var} \left( \frac{1}{K} \epsilon_{i} \frac{1}{(K-1)} Z_{j} \right) + \sum_{i} \text{Var} \left(  \frac{1}{K} \alpha \epsilon_{i} Z_{i} \right) - (1-\alpha)^{2} \frac{1}{K^{2}} \text{Var}(Z_{1}) \text{Var}(\epsilon_{1}) \\
  &= \text{Var}(\epsilon_{1}) \text{Var}(Z_{1}) \left( \frac{1}{K(K-1)} + \frac{\alpha^{2}}{K} - (1-\alpha)^{2} \frac{1}{K^{2}}  \right)
\end{align*}
Thus, it suffices to show that
\begin{equation*}
   \frac{1}{K(K-1)} - \frac{1}{K^{2}} + \frac{\alpha^{2}}{K} - \frac{\alpha^{2}}{K^{2}}  + 2 \alpha \frac{1}{K^{2}} \ge 0
 \end{equation*}
Or, equivalently,
 \begin{equation}\label{eq:33}
   \frac{1}{K^{2}(K-1)} + \frac{\alpha^{2} (K-1)}{K^{2}} + 2 \alpha \frac{1}{K^{2}} \ge 0.
 \end{equation}
Dividing by $(K-1)/K^{2}$ yields
 \begin{equation*}
   \frac{1}{(K-1)^{2}} + \alpha^{2} + 2 \frac{\alpha}{(K-1)} \ge 0.
 \end{equation*}
 The left term can be written as
\begin{equation*}
  \left( \frac{1}{(K-1) } + \alpha \right)^{2}.
\end{equation*} 
 This shows the inequality in equation~\eqref{eq:33}, which completes the proof.

\end{proof}

\begin{lemma}\label{lemma:baba}
  Let $\epsilon_{i}$ be i.i.d.\ centered Gaussian random variables with nonzero variance and $K \ge 2$. Then,
  \begin{equation}\label{eq:34}
  \text{Var} \left( \left( \frac{1}{K} \sum_{i} \epsilon_{i} \right)^{2}  \right) < \text{Var}  \left( \frac{1}{K} \sum_{i=1}^{K} \left( \epsilon_{i}  \right)^{2} \right)
\end{equation}  
\end{lemma}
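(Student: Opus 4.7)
The plan is to reduce both sides of \eqref{eq:34} to explicit closed-form expressions using standard chi-square moment computations, and then to compare them directly. Write $\sigma^{2} = \text{Var}(\epsilon_{1}) > 0$.

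For the left-hand side, observe that $\frac{1}{K}\sum_{i=1}^{K}\epsilon_{i}$ is centered Gaussian with variance $\sigma^{2}/K$, so its square is distributed as $(\sigma^{2}/K)\,\chi^{2}_{1}$. Since $\text{Var}(\chi^{2}_{1}) = 2$, the left-hand side equals $2\sigma^{4}/K^{2}$.

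For the right-hand side, I would use that $\epsilon_{1}^{2},\ldots,\epsilon_{K}^{2}$ are i.i.d., each distributed as $\sigma^{2}\chi^{2}_{1}$, so that $\text{Var}(\epsilon_{i}^{2}) = 2\sigma^{4}$. By independence, the variance of the sample mean $\frac{1}{K}\sum_{i=1}^{K}\epsilon_{i}^{2}$ is $2\sigma^{4}/K$.

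Comparing the two expressions, \eqref{eq:34} reduces to $2\sigma^{4}/K^{2} < 2\sigma^{4}/K$, which holds strictly whenever $K \ge 2$ and $\sigma^{2} > 0$, completing the proof. There is no genuine obstacle here; the only ingredient beyond arithmetic is the identity $\text{Var}(X^{2}) = 2\sigma^{4}$ for $X \sim \mathcal{N}(0,\sigma^{2})$, which is immediate from the fourth moment of a centered Gaussian.
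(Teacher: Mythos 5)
Your proof is correct and matches the paper's argument: both sides reduce to $\tfrac{1}{K^{2}}\mathrm{Var}(\epsilon_{1}^{2})$ and $\tfrac{1}{K}\mathrm{Var}(\epsilon_{1}^{2})$ respectively, using the Gaussianity of the sample mean for the left-hand side and independence of the $\epsilon_{i}^{2}$ for the right-hand side. The explicit chi-square normalization you add is just a more concrete way of stating the same computation.
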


\begin{proof}
  On the left-hand side of equation~\eqref{eq:34} we have
\begin{equation*}
  \frac{1}{K^{2}} \text{Var}(\epsilon_{1}^{2}).
\end{equation*}  
On the right-hand side of equation~\eqref{eq:34} we have
\begin{equation*}
  \frac{1}{K} \text{Var}(\epsilon_{1}^{2}).
\end{equation*}
This completes the proof.
\end{proof}

\begin{lemma}\label{lemma:gaga}
Let $Z_{i}$ be i.i.d.\ centered Gaussian random variables and $K \ge 2$. Then,
\begin{align*}
  \text{Var} \left( \left( \frac{1}{K} \sum_{i} Z_{i}  - \alpha Z_{i} \right)^{2} \right)   \le  \text{Var} \left( \frac{1}{K} \sum_{i=1}^{K} \left( \frac{1}{K-1} \sum_{j \neq i} Z_{j}  - \alpha Z_{i} \right)^{2} \right) .
\end{align*}
\end{lemma}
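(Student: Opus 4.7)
My plan is to exhibit the RHS random variable as the sum of the LHS random variable and an additional independent term, so that the inequality follows immediately from variance additivity.

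First, I would introduce $W_i = \frac{1}{K-1}\sum_{j \neq i} Z_j - \alpha Z_i$, so that the RHS becomes $\text{Var}(\frac{1}{K}\sum_i W_i^2)$. A short calculation shows that $\bar W := \frac{1}{K}\sum_i W_i = \frac{1-\alpha}{K}\sum_i Z_i$ (each $Z_j$ appears in the inner sums for exactly $K-1$ values of $i$, contributing $Z_j$ once, and then the $-\alpha Z_i$ terms sum to $-\alpha \sum_i Z_i$). This is exactly the linear combination whose square appears in the LHS; hence the LHS equals $\text{Var}(\bar W^2)$.

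Next I would apply the algebraic identity $\frac{1}{K}\sum_i W_i^2 = \bar W^2 + \frac{1}{K}\sum_i (W_i - \bar W)^2$. Because $(Z_1,\ldots,Z_K)$ are i.i.d.\ centered Gaussian, the joint vector $(\bar W, W_1-\bar W,\ldots,W_K-\bar W)$ is Gaussian, so independence of $\bar W$ from the residual vector $(W_i - \bar W)_i$ follows once we verify $\text{Cov}(\bar W, W_i - \bar W) = 0$ for each $i$. Writing $\sigma^2 = \text{Var}(Z_1)$, this reduces to a short computation showing $\text{Cov}(\bar W, W_i) = (1-\alpha)^2 \sigma^2 / K$, which also equals $\text{Var}(\bar W)$, so the difference vanishes. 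Consequently $\bar W^2$ and $\frac{1}{K}\sum_i (W_i - \bar W)^2$ are independent, and therefore $\text{Var}(\frac{1}{K}\sum_i W_i^2) = \text{Var}(\bar W^2) + \text{Var}(\frac{1}{K}\sum_i (W_i - \bar W)^2) \ge \text{Var}(\bar W^2)$, which is precisely the stated inequality.

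The only conceptual step is recognizing that $\bar W$ coincides with the LHS linear combination, so that the ANOVA-type identity $\frac{1}{K}\sum W_i^2 = \bar W^2 + \frac{1}{K}\sum (W_i - \bar W)^2$ is a genuine orthogonal decomposition in the Gaussian sense; once this is spotted, what remains is either a direct covariance computation or an application of variance additivity for independent random variables. There is no substantive obstacle beyond this observation, and the argument does not require computing the fourth moments of the $W_i$'s explicitly.
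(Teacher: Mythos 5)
Your proof is correct, and it takes a genuinely different and cleaner route than the paper's. You identify the left-hand side as $\mathrm{Var}(\bar W^2)$ with $\bar W = \frac{1}{K}\sum_i W_i = \frac{1-\alpha}{K}\sum_i Z_i$, apply the ANOVA identity $\frac{1}{K}\sum_i W_i^2 = \bar W^2 + \frac{1}{K}\sum_i (W_i-\bar W)^2$, and then use joint Gaussianity plus the covariance computation $\mathrm{Cov}(\bar W, W_i) = \frac{(1-\alpha)^2}{K}\mathrm{Var}(Z_1) = \mathrm{Var}(\bar W)$ to conclude that $\bar W$ is independent of the residual vector, so the variances add. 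All of these steps check out. The paper instead computes the fourth moments of both sides explicitly and reduces the inequality to showing that a quartic polynomial in $\alpha$ is nonnegative, which it recognizes as $(K-1)^2\left(\alpha + \frac{1}{K-1}\right)^4 \ge 0$. Your argument buys conceptual transparency and robustness: it avoids all fourth-moment bookkeeping, makes the inequality an instance of variance additivity for an orthogonal decomposition, and exhibits the gap directly as $\mathrm{Var}\bigl(\frac{1}{K}\sum_i (W_i-\bar W)^2\bigr)$; since $W_i - W_j = \bigl(\alpha + \frac{1}{K-1}\bigr)(Z_j - Z_i)$, it also recovers the same equality condition $\alpha = -\frac{1}{K-1}$ that is implicit in the paper's perfect-fourth-power identity. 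The paper's computation, by contrast, produces the explicit polynomial form of the difference, which is what gets reused in the surrounding argument of Lemma~\ref{lemma:gauss2}, but as a proof of this lemma alone your route is preferable.
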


\begin{proof}
It suffices to show that
\begin{align}\label{eq:42}
  \text{Var} \left( \left( \frac{1}{\sqrt{K}}\sum_{i} Z_{i}  - \alpha Z_{i} \right)^{2} \right)   \le  \text{Var} \left( \sum_{i=1}^{K} \left( \frac{1}{K-1} \sum_{j \neq i} Z_{j}  - \alpha Z_{i} \right)^{2} \right).
\end{align}
On the left-hand side we have
\begin{equation*}
  (1-\alpha)^{4} \text{Var}(Z_{1}^{2}).
\end{equation*}
On the right-hand side of equation~\eqref{eq:42} we have
\begin{align*}
  & \text{Var} \left(  \frac{1}{(K-1)} \sum_{i}  Z_{i}^{2}  +  \frac{(K-2)}{(K-1)^{2}} \sum_{i > j}  2 Z_{i} Z_{j} - \sum_{i>j} \frac{\alpha}{(K-1)} 4 Z_{i} Z_{j}      + \alpha^{2} \sum_{i=1}^{K} Z_{i}^{2} \right) \\
  &=   K(\frac{1}{(K-1)} + \alpha^{2})^{2} \text{Var}(Z_{1}^{2}) +  (\frac{(K-2)}{(K-1)^{2}} - \frac{2 \alpha}{(K-1)})^{2} \frac{K (K-1)}{2} 2 \text{Var}(Z_{1}^{2}) \\
  &= \text{Var}(Z_{1}^{2}) (  K \alpha^{4}  - \alpha \frac{(K-2) 4 K}{(K-1)^{2} }  +   \alpha^{2} (2\frac{K}{(K-1)} + 4 \frac{K}{(K-1)}) \\
&+  \frac{K (K-1) + K (K-2)^{2}}{(K-1)^{3}} )
\end{align*}
Using the two inequalities above, it suffices to show that for all $\alpha$ and all $K \ge 2$,
\begin{align*}  
 & K \alpha^{4}   +  6\alpha^{2} \frac{K}{(K-1)} - \alpha \frac{(K-2)4K}{(K-1)^{2}}  +  \frac{K (K-1) + K (K-2)^{2}}{(K-1)^{3}} \\
  \ge&   \alpha^{4}  - 4 \alpha^{3} +  6 \alpha^{2} - 4 \alpha + 1
\end{align*}
Rearranging, it suffices to show that
\begin{align*}
  &(K-1) \alpha^{4}  + 4 \alpha^{3} +  6\alpha^{2} \frac{1}{(K-1)} +  \alpha \frac{4}{(K-1)^{2}}\\
  &+  \frac{K (K-1) + K (K-2)^{2} - (K-1)^{3}}{(K-1)^{3}}  \ge  0.
\end{align*}
Multiplying with $(K-1)$, this is equivalent to
\begin{align*}
  &(K-1)^{2} \alpha^{4} + 4 \alpha^{3} (K-1)  +     6\alpha^{2} \\
  &  +  \alpha \frac{4}{(K-1)} +  \frac{K (K-1) + K (K-2)^{2} - (K-1)^{3}}{(K-1)^{2}}  \ge 0,
\end{align*}
which is equivalent to
\begin{align*}
  &(K-1)^{2} \alpha^{4} + 4 \alpha^{3} (K-1) +      6\alpha^{2}  +  \alpha \frac{4}{(K-1)} \\
  & +  \frac{K^{2} -K + K^{3} - 4 K^{2}+ 4K  - K^{3} + 3K^{2} - 3K +1 }{(K-1)^{2}} \ge 0.
    \end{align*}
This inequality is equivalent to
\begin{equation}\label{eq:43}
(K-1)^{2} \alpha^{4} + 4 \alpha^{3} (K-1) +      6\alpha^{2}  +  \alpha \frac{4}{(K-1)} +  \frac{ 1 }{(K-1)^{2}}   \ge 0.
\end{equation}
Rearranging the left-hand side, we obtain
\begin{equation*}
 (K-1)^{2} \left( \alpha + \frac{1}{(K-1)} \right)^{4}.
\end{equation*}
This proves the inequality of equation~\eqref{eq:43}, which completes the proof.
    
\end{proof}

\end{appendix}

\end{document}